\pgfplotsset{compat=1.18}
\definecolor{deepblue}{rgb}{0,0,0.5}
\definecolor{deepred}{rgb}{0.6,0,0}
\definecolor{deepgreen}{rgb}{0,0.5,0}
\DeclareFixedFont{\ttb}{T1}{txtt}{bx}{n}{8} % for bold
\DeclareFixedFont{\ttm}{T1}{txtt}{m}{n}{8}  % for normal
\newcommand{\kw}[1]{\ensuremath{\mathsf{#1}}}
\newcommand{\plus}{\kw{true}}
\newcommand{\minus}{\kw{false}}
\newcommand{\PreserveBackslash}[1]{\let\temp=\\#1\let\\=\temp}
\newcolumntype{C}[1]{>{\PreserveBackslash\centering}p{#1}}
\newcolumntype{R}[1]{>{\PreserveBackslash\raggedleft}p{#1}}
\newcolumntype{L}[1]{>{\PreserveBackslash\raggedright}p{#1}}
\newcommand{\orientp}[3]{\sigma(#1,#2,#3)}
\newcommand{\orientn}[3]{\lnot{\sigma(#1,#2,#3)}}
\newcommand{\relaxp}[5]{r(#1,#2,#3,#4,#5)}
\newcommand{\relaxn}[5]{\lnot{r(#1,#2,#3,#4,#5)}}
\newcommand{\evalmaxsat}{\textsf{EvalMaxSAT}\xspace}
\newcommand{\maxcdcl}{\textsf{MaxCDCL}\xspace}
\newcommand{\pacose}{\textsf{Pacose}\xspace}
\newcommand{\maxhs}{\textsf{MaxHS}\xspace}
\newcommand{\veripb}{\textsf{VeriPB}\xspace}
\newcommand{\veritas}{\textsf{VeritasPBLib}\xspace}
\def\orcidID#1{\href{http://orcid.org/#1}{\protect\raisebox{-1.25pt}{\protect\includegraphics{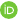}}}}
\begin{document}
%
%\title{Verifying Minimum Planar Pentagons\\with MaxSAT}
\title{Automated Mathematical Discovery and Verification: Minimizing Pentagons in the Plane}
%\title{Automated Mathematical Discovery and Verification In Discrete a Geometry Problem}
\author{Bernardo Subercaseaux {\Letter} \orcidID{0000-0003-2295-1299}  \and John Mackey \orcidID{0000-0001-7319-4377} \and \\ Marijn J.H. Heule \orcidID{0000-0002-5587-8801} \and Ruben Martins \orcidID{0000-0003-1525-1382}}
\authorrunning{Subercaseaux et al.}
\titlerunning{Automated Reasoning for Minimizing Pentagons in the Plane}
\institute{Carnegie Mellon University, Pittsburgh, PA 15213, USA\\\email{\{bsuberca, jmackey, mheule, rubenm\}@andrew.cmu.edu}}
%
% First names are abbreviated in the running head.
% If there are more than two authors, 'et al.' is used.
%
\maketitle

\begin{abstract}
  We present a comprehensive demonstration of how automated reasoning can assist mathematical research, both in the discovery of conjectures and in their verification. 
Our focus is a discrete geometry problem: \emph{What is $\mu_{5}(n)$, the minimum number of convex pentagons induced by $n$ points in the plane?}
In the first stage toward tackling this problem, automated reasoning tools guide discovery and conjectures: we use SAT-based tools to find abstract configurations of points that would induce few pentagons. Afterward, we use Operations Research tools to find and visualize realizations of these configurations in the plane, if they exist.  
Mathematical thought and intuition are still vital parts of the process for turning the obtained visualizations into general
constructions. A surprisingly simple upper bound follows from our constructions: $\mu_{5}(n) \leq \binom{\lfloor n/2 \rfloor}{5} + \binom{\lceil n/2 \rceil}{5}$, and we conjecture it is optimal. In the second stage, we turn our focus to verifying this conjecture. Using MaxSAT, we confirm that $\mu_5(n)$ matches the conjectured values for $n \leq 16$, thereby improving both the existing lower and upper bounds for $n \in [12, 16]$. Our MaxSAT results rely on two mathematical theorems with pen-and-paper proofs, highlighting once again the rich interplay between automated and traditional mathematics.
	\keywords{MaxSAT  \and Convex Pentagons \and Computational Geometry.}
\end{abstract}

\section{Introduction}\label{sec:intro}
Computation has played an increasingly large role within mathematics over the last 50 years. Back in 1976, Appel and Haken proved the celebrated \emph{Four Color Theorem} using a significant amount of computation~\cite{appelFourColorProblem1978}, which ultimately led to a formally verified \textsf{Coq} proof, written by Georges Gonthier in 2004~\cite{gonthierFourColourTheorem2008a, gonthier:hal-04034866}. 
 These results serve to highlight a dual role of computing in mathematics: solving problems and verifying solutions~\cite{avigad2023mathematics}. In present times, Large Language Models (LLMs) emerge as a new actor in computer-assisted mathematics; making progress in the \emph{cap-set problem}
%  \footnote{\emph{FunSearch}~\cite{romera-paredesMathematicalDiscoveriesProgram2024} was used for \emph{discovery} of solutions, without any verification of optimality; akin to a local search algorithm.}
 ~\cite{romera-paredesMathematicalDiscoveriesProgram2024}, solving olympiad-level geometry problems~\cite{trinhSolvingOlympiadGeometry2024a} and assisting formal theorem proving~\cite{yang2023leandojo}. Despite claims of progress in AI threatening mathematicians' jobs~\cite{eloundou2023gpts}, we adhere to the words of Jordan Ellenberg~\cite{castelvecchiDeepMindAIOutdoes2023}, co-author in the recent \emph{LLMs for the cap-set problem} article~\cite{romera-paredesMathematicalDiscoveriesProgram2024}: 
 \begin{center}
    \emph{``What's most exciting to me is modeling new modes of human–machine collaboration, [...] I don’t look to use these as a replacement for human mathematicians, but as a force multiplier.''} 
 \end{center}
In that spirit, this article presents a self-contained story of human-machine collaboration in mathematics, showcasing how automated reasoning tools can be incorporated in a mathematician's toolkit.
 
% \vspace{-1em}
\paragraph{\bf Automated Reasoning.}
 Automated reasoning tools have been successfully used in the past to solve mathematical problems of diverse areas: Erd\H{o}s Discrepancy Conjecture~\cite{konev2014sat}, Keller's conjecture~\cite{brakensiek2023resolution}, the Packing Chromatic number of the infinite grid~\cite{Subercaseaux_Heule_2023}, and the Pythagorean Triples Problem~\cite{Heule_2016}, amongst many others.  Interestingly, before the recent progress made with LLMs~\cite{romera-paredesMathematicalDiscoveriesProgram2024}, the prior state of the art for the cap-set problem was obtained via SAT solving~\cite{tyrrellNewLowerBounds2023}. In the context of discrete geometry, Scheucher has used SAT solving to obtain state-of-the-art results in \emph{Erd\H{o}s-Szekeres} type problems~\cite{Scheucher_2023, Scheucher_2020, heuleHappyEndingEmpty2024}, making it our most closely related work.
The main novelty of this article is that we apply automated reasoning tools throughout the different stages of a mathematical problem: to guide the discovery of mathematical constructions, elicit a conjecture, and finally verify it until a certain bound to increase our confidence in it.

 \paragraph{\bf The Pentagon Minimization Problem.}

 % \emph{Ramsey theory}, dating back to 1928, revolves around the question~\emph{``How big must some structure be to guarantee that a particular property holds?''}~\cite{Graham2015,Ramsey1930}. In 1933, unaware of Ramsey's recent findings, Klein gave this question a geometric flavor by presenting the following problem~\cite{Graham_Spencer_1990}: \emph{If five points lie on a plane so that no three points form a straight line, prove that four of the points will form a convex quadrilateral.} Klein's problem inspired two natural generalizations with a long lasting impact on combinatorial geometry:
%gave \emph{Ramsey theory} a geometric flavor by presenting
In 1933, Klein presented the following problem~\cite{Graham_Spencer_1990}: \emph{If five points lie on a plane, without three on a straight line, prove that four of the points will make a convex quadrilateral.} Klein's problem inspired two natural generalizations with a long lasting impact on combinatorial geometry:
\begin{problem}
For a given $k \geq 3$, is there always a minimum number of points  $n = g(k)$, such that any set of $n$ points in the plane, without three in a line, is guaranteed to contain $k$ points that are vertices of a convex $k$-gon?\label{problem:1}
\end{problem}

 \begin{problem}
 For a given $k \geq 3$, what is the minimum number of convex $k$-gons, $\mu_k(n)$, one can obtain after placing $n$ points in the plane without three in a line?\label{problem:2}
 \end{problem}

%In that very year, 1933, %\emph{rediscovered} Ramsey theory in order to attack~Problem~\ref{problem:1}~\cite{Graham_Spencer_1990}, 
%and %
Erd\H{o}s and Szekeres published an affirmative answer to Problem~\ref{problem:1} in 1935.
%with a celebrated theorem establishing that $g(k) \leq \binom{2k - 4}{k-2} + 1$~\cite{Erdos}.
%Furthermore, in the same publication they conjectured that $g(k) = 2^{k-2} + 1$, which remains a tantalizing open problem for $k > 6$, despite significant progress~\cite{Morris_Soltan_2000}. 
Szekeres and Klein married shortly afterward, leading Erd\H{o}s to refer to Problem~\ref{problem:1} as the \emph{``Happy Ending Problem''}~\cite{Morris_Soltan_2000}. Problem~\ref{problem:2}, on the other hand, is directly mentioned for the first time by Erd\H{o}s and Guy in 1973~\cite{Erdos_Guy_1973}:
\emph{``More generally, one can ask for the least number of convex $k$-gons determined by $n$ points
 in the plane.}
% As before, the ratio of this number to $\binom{n}{k}$ tends to a positive limit as $n$ tends to infinity with $k$ fixed.''} 
A standard argument we show in~\Cref{sec:odd-even} implies that the limits
 \(
 	c_k := \lim_{n \to \infty} \mu_k(n) /{ \binom{n}{k}}
\)
are well-defined. 
%Nonetheless, determining the exact values of $c_k$ has proven to be a challenging task in combinatorial geometry.
%The case of $k=3$ leads immediately to 
Note that $c_3 = 1$ as every set of $3$ points in general position forms a triangle.  Perhaps surprisingly, $c_4$
%\footnote{Also known as the \emph{rectilinear crossing constant} due to the equivalence between minimizing convex quadrilaterals and minimizing the number of edge-crossings in rectilinear drawings of the $K_n$ graph~\cite{AFMS}.} 
is still unknown despite having received significant attention~\cite{Ongoing,AFMS}; the best known bounds are roughly
\(
	0.3799 \leq c_4 \leq 0.3804
\)~\cite{Ongoing,OngoingArxiv}.
Computation has played a crucial role in: (i) improving the bounds on $c_4$, (ii) computing $\mu_k(n)$ for small values of $k$ and $n$, and (iii) classifying small sets of points according to their geometric relationships~\cite{Aichholzer_Aurenhammer_Krasser_2001}.
% Furthermore, these goals are tightly connected, as it is known that determining $\mu_k(n)$ 
%(or having good bounds on it) 
% for larger values of $n$ allows for better estimates of $c_k$~\cite{AFMS}. %As of today, 
As of today, the value of $\mu_4(n)$ is only known for $n \leq 27$ and $n=30$~\cite{AFMS}, and less is known for $k \geq 5$~\cite{Aichholzer_Aurenhammer_Krasser_2001,Aichholzer_Hackl_Vogtenhuber_2012,Aichholzer,goaoc_et_al:LIPIcs.SOCG.2015.300}, where the question raised by Erd\H{o}s and Guy, 50 years ago, remains widely open. 
%In this context, our work adds to the success of automated reasoning techniques for \emph{Erd\H{o}s-Szekeres} type problems~\cite{Scheucher_2023, Scheucher_2020}, 
We provide new insights into $\mu_5(n)$ and $c_5$, and opening directions for studying $\mu_k(n)$ and $c_k$ for larger values of $k$ as well.\\

%\vspace{-3em}
\begin{table}[t]
	%\footnotesize
	\caption{Improvements on $\mu_5(n)$. Brackets indicate the range of values in which $\mu_5(\cdot)$ was known to belong.}
	\centering
	\vspace{0.4em}
	\begin{tabular}{l|C{0.6cm}C{0.6cm}C{0.6cm}C{0.6cm}C{1.4cm}C{1.4cm}C{1.4cm}C{1.6cm}C{0.6cm}}
		\toprule
		\# of points $(n)$ & $\prescript{\leq}{}{8}$ & $9$ & $10$ & $11$ & $12$ & $13$ & $14$ & $15$ & $16$\\ \midrule
		Previously~\cite{Aichholzer} & 0 & 1 & 2 & 7 & [12, 13]
		 & [20, 34] & [40, 62] & [60, 113] & --- \\
		\textbf{Our work} & 0 & 1 & 2 & 7 & \textbf{12} & \textbf{27} & \textbf{42} & \textbf{77} & \textbf{112}\\
		\bottomrule
	\end{tabular}\label{table:improvements}
\end{table}

\vspace{-0.5em}
\paragraph{\bf Our contributions and outline.} 
This article makes progress on Problem~\ref{problem:2} in the particular case of $k=5$. 
As shown in~\Cref{table:improvements}, we fully determine $\mu_5(n)$ for $n \leq 16$, furthering prior results which reached $n = 11$~\cite{Aichholzer}. We start by providing some background into the geometry of the problem and known values of $\mu_5(n)$ in~\Cref{sec:background}, and continue in \Cref{sec:SLS} by presenting an initial exploration of $\mu_5(n)$ through \emph{Stochastic Local Search} (SLS) using \emph{signotopes}. 
Given the absence of open-source programs that find concrete point
realizations of signotopes, we present in~\Cref{sec:realizability} a simple local-search
program that finds them. We depict particular realizations obtained for the satisfying assignments that we found through SLS, which provide geometric insight into the problem. Based on those realizations, in~\Cref{sec:constructions} we propose and study two simple constructions providing a common upper bound of
\(
\mu_5(n) \leq \binom{\lfloor n/2\rfloor}{5} + \binom{\lceil n/2\rceil}{5},
\)  
which we conjecture to be optimal. \Cref{sec:odd-even} shows that if the conjecture holds for an odd value of $n$, then it will also hold for $n+1$.
\Cref{sec:maxsat} discusses the verification of the conjecture for $n \leq 16$ through MaxSAT, and finally,~\Cref{sec:conclusion} discusses the impact of the newly found values of $\mu_5(n)$ for bounding $c_5$ and offers a series of related open problems.	
Our code is available at \url{https://github.com/bsubercaseaux/minimize-5gons}.

\section{Background}\label{sec:background}
\input{sec-background.tex}

%\vspace{-0.75em}
\section{Encoding and Stochastic Local Search}\label{sec:SLS}
% The signotope variable $\orientp{a}{b}{c}$ is true if $c$ is above the line $ab$, and false if $c$ is below the line $ab$.

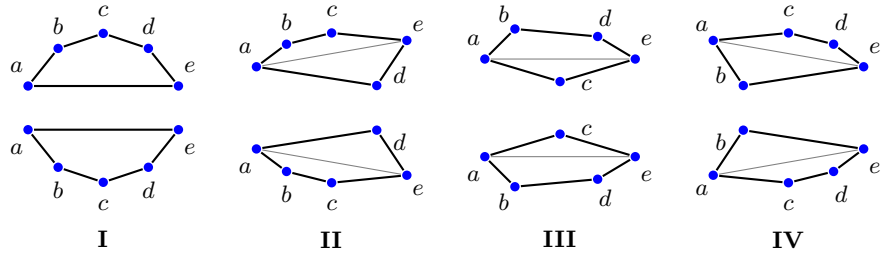
\begin{figure}[t]
\centering
\begin{minipage}{.24\textwidth}
\centering
\begin{tikzpicture}
\node at (-0.15,0.25) {$a$};
\node at (0.4,0.8) {$b$};
\node at (1,1.0) {$c$};
\node at (1.6,0.8) {$d$};
\node at (2.15,0.25) {$e$};
\node[inner sep=1pt, circle, thick, fill=blue, scale=1.25] (a) at (0,0) {};
\node[inner sep=1pt, circle, thick, fill=blue, scale=1.25] (b) at (0.4,0.5) {};
\node[inner sep=1pt, circle, thick, fill=blue, scale=1.25] (c) at (1,0.7) {};
\node[inner sep=1pt, circle, thick, fill=blue, scale=1.25] (d) at (1.6,0.5) {};
\node[inner sep=1pt, circle, thick, fill=blue, scale=1.25] (e) at (2,0) {};
\draw[black,thick] (a) -- (b) -- (c) -- (d) -- (e) -- (a);
%\draw[gray] (a) -- (e);
\end{tikzpicture}

\bigskip

\begin{tikzpicture}
\node at (-0.15,-0.25) {$a$};
\node at (0.4,-0.8) {$b$};
\node at (1,-1.0) {$c$};
\node at (1.6,-0.8) {$d$};
\node at (2.15,-0.25) {$e$};
\node[inner sep=1pt, circle, thick, fill=blue, scale=1.25] (a) at (0,0) {};
\node[inner sep=1pt, circle, thick, fill=blue, scale=1.25] (b) at (0.4,-0.5) {};
\node[inner sep=1pt, circle, thick, fill=blue, scale=1.25] (c) at (1,-0.7) {};
\node[inner sep=1pt, circle, thick, fill=blue, scale=1.25] (d) at (1.6,-0.5) {};
\node[inner sep=1pt, circle, thick, fill=blue, scale=1.25] (e) at (2,0) {};
\draw[black,thick] (a) -- (b) -- (c) -- (d) -- (e) -- (a);
%\draw[gray] (a) -- (e);
\end{tikzpicture}

{\bf I}
\end{minipage}
\begin{minipage}{.24\textwidth}
\centering
\begin{tikzpicture}
\node at (-0.15,0.25) {$a$};
\node at (0.4,0.6) {$b$};
\node at (1,0.75) {$c$};
\node at (1.9,-0.1) {$d$};
\node at (2.15,0.55) {$e$};
\node[inner sep=1pt, circle, thick, fill=blue, scale=1.25] (a) at (0,0) {};
\node[inner sep=1pt, circle, thick, fill=blue, scale=1.25] (b) at (0.4,0.3) {};
\node[inner sep=1pt, circle, thick, fill=blue, scale=1.25] (c) at (1,0.45) {};
\node[inner sep=1pt, circle, thick, fill=blue, scale=1.25] (d) at (1.6,-0.25) {};
\node[inner sep=1pt, circle, thick, fill=blue, scale=1.25] (e) at (2,0.35) {};
\draw[black,thick] (a) -- (b) -- (c) -- (e) -- (d) -- (a);
\draw[gray] (a) -- (e);
\end{tikzpicture}

\bigskip

\begin{tikzpicture}
\node at (-0.15,-0.25) {$a$};
\node at (0.4,-0.6) {$b$};
\node at (1,-0.75) {$c$};
\node at (1.9,0.1) {$d$};
\node at (2.15,-0.55) {$e$};
\node[inner sep=1pt, circle, thick, fill=blue, scale=1.25] (a) at (0,0) {};
\node[inner sep=1pt, circle, thick, fill=blue, scale=1.25] (b) at (0.4,-0.3) {};
\node[inner sep=1pt, circle, thick, fill=blue, scale=1.25] (c) at (1,-0.45) {};
\node[inner sep=1pt, circle, thick, fill=blue, scale=1.25] (d) at (1.6,0.25) {};
\node[inner sep=1pt, circle, thick, fill=blue, scale=1.25] (e) at (2,-0.35) {};
\draw[black,thick] (a) -- (b) -- (c) -- (e) -- (d) -- (a);
\draw[gray] (a) -- (e);
\end{tikzpicture}

{\bf II}
\end{minipage}
\begin{minipage}{.24\textwidth}
\centering
\begin{tikzpicture}
\node at (-0.15,0.25) {$a$};
\node at (0.25,0.6) {$b$};
\node at (1.35,-0.35) {$c$};
\node at (1.6,0.55) {$d$};
\node at (2.15,0.25) {$e$};
\node[inner sep=1pt, circle, thick, fill=blue, scale=1.25] (a) at (0,0) {};
\node[inner sep=1pt, circle, thick, fill=blue, scale=1.25] (b) at (0.4,0.4) {};
\node[inner sep=1pt, circle, thick, fill=blue, scale=1.25] (c) at (1,-0.3) {};
\node[inner sep=1pt, circle, thick, fill=blue, scale=1.25] (d) at (1.5,0.3) {};
\node[inner sep=1pt, circle, thick, fill=blue, scale=1.25] (e) at (2,0) {};
\draw[black,thick] (a) -- (b) -- (d) -- (e) -- (c) -- (a);
\draw[gray] (a) -- (e);
\end{tikzpicture}

\medskip

\begin{tikzpicture}
\node at (-0.15,-0.25) {$a$};
\node at (0.25,-0.6) {$b$};
\node at (1.35,0.35) {$c$};
\node at (1.6,-0.55) {$d$};
\node at (2.15,-0.25) {$e$};
\node[inner sep=1pt, circle, thick, fill=blue, scale=1.25] (a) at (0,-0) {};
\node[inner sep=1pt, circle, thick, fill=blue, scale=1.25] (b) at (0.4,-0.4) {};
\node[inner sep=1pt, circle, thick, fill=blue, scale=1.25] (c) at (1,0.3) {};
\node[inner sep=1pt, circle, thick, fill=blue, scale=1.25] (d) at (1.5,-0.3) {};
\node[inner sep=1pt, circle, thick, fill=blue, scale=1.25] (e) at (2,0) {};
\draw[black,thick] (a) -- (b) -- (d) -- (e) -- (c) -- (a);
\draw[gray] (a) -- (e);
\end{tikzpicture}

{\bf III}
\end{minipage}
\begin{minipage}{.24\textwidth}
\centering
\begin{tikzpicture}
\node at (-0.15,0.55) {$a$};
\node at (0.1,-0.1) {$b$};
\node at (1,0.75) {$c$};
\node at (1.7,0.6) {$d$};
\node at (2.15,0.25) {$e$};
\node[inner sep=1pt, circle, thick, fill=blue, scale=1.25] (a) at (0,0.35) {};
\node[inner sep=1pt, circle, thick, fill=blue, scale=1.25] (b) at (0.4,-0.25) {};
\node[inner sep=1pt, circle, thick, fill=blue, scale=1.25] (c) at (1,0.45) {};
\node[inner sep=1pt, circle, thick, fill=blue, scale=1.25] (d) at (1.6,0.3) {};
\node[inner sep=1pt, circle, thick, fill=blue, scale=1.25] (e) at (2,0) {};
\draw[black,thick] (a) -- (b) -- (e) -- (d) -- (c) -- (a);
\draw[gray] (a) -- (e);
\end{tikzpicture}

\bigskip

\begin{tikzpicture}
\node at (-0.15,-0.55) {$a$};
\node at (0.1,0.1) {$b$};
\node at (1,-0.75) {$c$};
\node at (1.7,-0.6) {$d$};
\node at (2.15,-0.25) {$e$};
\node[inner sep=1pt, circle, thick, fill=blue, scale=1.25] (a) at (0,-0.35) {};
\node[inner sep=1pt, circle, thick, fill=blue, scale=1.25] (b) at (0.4,0.25) {};
\node[inner sep=1pt, circle, thick, fill=blue, scale=1.25] (c) at (1,-0.45) {};
\node[inner sep=1pt, circle, thick, fill=blue, scale=1.25] (d) at (1.6,-0.3) {};
\node[inner sep=1pt, circle, thick, fill=blue, scale=1.25] (e) at (2,0) {};
\draw[black,thick] (a) -- (b) -- (e) -- (d) -- (c) -- (a);
\draw[gray] (a) -- (e);
\end{tikzpicture}

{\bf IV}
\end{minipage}
\caption{The convex 5-gon cases based on the position of $b$, $c$, $d$ w.r.t.\ the line $\overline{ae}$.}\label{fig:cases}
\end{figure}

% The axiom clauses for signotopes are as follows:

We will use the signotopes $\sigma(a, b, c)$ directly as propositional variables in our encoding. As a first step, we directly add the $O(n^4)$ axiom clauses of Equations~(\ref{eq:axiom1})-(\ref{eq:axiom4}).
Then, in order to minimize the number of induced convex $5$-gons, we use an idea of Szekeres and Peters~\cite{szekeres_peters_2006}.  Szekeres and Peters identified the four cases that form a convex $5$-gon, depending on where the three middle points $b$, $c$, and $d$ are located with respect to the line through the leftmost point $a$ and the rightmost point $e$:
%  \textbf{case I:} $\orientp{a}{b}{c} = \orientp{b}{c}{d} = \orientp{c}{d}{e}$; \textbf{case II:} $\orientp{a}{b}{c} = \orientp{b}{c}{e} = \orientn{a}{d}{e}$; \textbf{case III:} $\orientp{a}{b}{d} = \orientp{b}{d}{e} = \orientn{a}{c}{e}$;  \textbf{case IV:} $\orientp{a}{b}{e} = \orientp{a}{c}{d} = \orientn{c}{d}{e}$.
%
\begin{itemize}
\item[~~]  \textbf{Case I}:\; \quad $\orientp{a}{b}{c} = \phantom{\lnot}\orientp{b}{c}{d} = \phantom{\lnot}\orientp{c}{d}{e}$
\item[~~]  \textbf{Case II}:\;  \;\,$\orientp{a}{b}{c} = \phantom{\lnot}\orientp{b}{c}{e} = \orientn{a}{d}{e}$
\item[~~]  \textbf{Case III}:\; $\orientp{a}{b}{d} = \phantom{\lnot}\orientp{b}{d}{e} = \orientn{a}{c}{e}$
\item[~~]  \textbf{Case IV}:\; $\orientp{a}{b}{e} = \orientn{a}{c}{d} = \orientn{c}{d}{e}$.
\end{itemize}

The four cases are illustrated in Figure~\ref{fig:cases}, showcasing that each case has two possible orientations depending on the value of the signotopes that the case asserts to be equal.
Given that we will first use \emph{Stochastic Local Search} (SLS), it is important to recall that an SLS solver attempts to find an assignment that minimizes the number of falsified clauses in its input formula, without any guarantee of optimality except for when a fully satisfying assignment is found. 
Therefore, as we want to find assignments to the signotope variables that minimize the number of convex 5-gons, we desire an encoding where each convex 5-gon falsifies exactly one clause. 
We achieve this by adding the following clauses, based on the disjoint cases \textbf{I}-\textbf{IV}, for every sorted tuple of five points $(a,b,c,d,e)$:

% We introduce \emph{relaxation} variables $\relaxp{a}{b}{c}{d}{e}$ to denote whether the 5-gon with vertices
% $a$, $b$, $c$, $d$, and $e$ is convex and thus must be avoided. 
% Based on cases \textbf{I}-\textbf{IV}, we have the following clauses for every subset of five points $(a,b,c,d,e)$:

% \begin{eqnarray}
% &&\relaxp{a}{b}{c}{d}{e} \lor \orientp{a}{b}{c} \lor \orientp{b}{c}{d} \lor \orientp{c}{d}{e} \label{eq:hard1}\\ 
% &&\relaxp{a}{b}{c}{d}{e} \lor \orientn{a}{b}{c} \lor \orientn{b}{c}{d} \lor \orientn{c}{d}{e} \label{eq:hard2}\\ 
% &&\relaxp{a}{b}{c}{d}{e} \lor \orientp{a}{b}{c} \lor \orientp{b}{c}{e} \lor \orientn{a}{d}{e} \label{eq:hard3}\\ 
% &&\relaxp{a}{b}{c}{d}{e} \lor \orientn{a}{b}{c} \lor \orientn{b}{c}{e} \lor \orientp{a}{d}{e} \label{eq:hard4}\\ 
% &&\relaxp{a}{b}{c}{d}{e} \lor \orientp{a}{b}{d} \lor \orientp{b}{d}{e} \lor \orientn{a}{c}{e} \label{eq:hard5}\\ 
% &&\relaxp{a}{b}{c}{d}{e} \lor \orientn{a}{b}{d} \lor \orientn{b}{d}{e} \lor \orientp{a}{c}{e} \label{eq:hard6}\\ 
% &&\relaxp{a}{b}{c}{d}{e} \lor \orientp{a}{b}{e} \lor \orientp{a}{c}{d} \lor \orientn{c}{d}{e} \label{eq:hard7}\\ 
% &&\relaxp{a}{b}{c}{d}{e} \lor \orientn{a}{b}{e} \lor \orientn{a}{c}{d} \lor \orientp{c}{d}{e} \label{eq:hard8}.
% \end{eqnarray}
\vspace{-10pt}
\begin{alignat}{5}
	\orientp{a}{b}{c} &\; \lor& \orientp{b}{c}{d} &\;\lor& \orientp{c}{d}{e} \label{eq:hard1}\\ 
	\orientn{a}{b}{c} &\; \lor& \orientn{b}{c}{d} &\;\lor& \orientn{c}{d}{e} \label{eq:hard2}\\ 
	\orientp{a}{b}{c} &\; \lor& \orientp{b}{c}{e} &\;\lor& \orientn{a}{d}{e} \label{eq:hard3}\\ 
	\orientn{a}{b}{c} &\; \lor& \orientn{b}{c}{e} &\;\lor& \orientp{a}{d}{e} \label{eq:hard4}\\ 
	\orientp{a}{b}{d} &\; \lor& \orientp{b}{d}{e} &\;\lor& \orientn{a}{c}{e} \label{eq:hard5}\\ 
	\orientn{a}{b}{d} &\; \lor& \orientn{b}{d}{e} &\;\lor& \orientp{a}{c}{e} \label{eq:hard6}\\ 
	\orientp{a}{b}{e} &\; \lor& \orientn{a}{c}{d} &\;\lor& \orientn{c}{d}{e} \label{eq:hard7}\\ 
	\orientn{a}{b}{e} &\; \lor& \orientp{a}{c}{d} &\;\lor& \orientp{c}{d}{e} \label{eq:hard8}.
\end{alignat}

Note that the best assignments found through SLS might, in principle, violate the signotope axiom clauses of Equations~(\ref{eq:axiom1})-(\ref{eq:axiom4}) in order to minimize the number of falsified clauses. Interestingly, we tested for about a thousand best assignments whether any axiom clauses were
falsified and this was never the case. Therefore the number of falsified clauses in all the best assignments found through SLS, presented in~\Cref{table:sls}, would constitute an upper bound on the minimum number of convex 5-gons if they were \emph{realizable}.

We also experimented with formulas without the signotope axiom clauses. The best number of falsified clauses of these formulas match numbers on ~\Cref{table:sls}.
So potentially violating many axiom clauses does not result in fewer 5-gons. It is therefore not surprising that none of the axiom clauses were falsified in the best found assignments. However, the runtimes to obtain the best-known values were substantially higher (roughly an order of magnitude) for the formulas without the axiom clauses. So these clauses are helpful to reduce the runtime.

% To minimize the number of $5$-gons, we want to maximize the number of relaxation variables that
% are assigned to \minus. This is achieved by adding a negative unit clause for each relaxation variable. 
% \begin{eqnarray}
% \bigwedge_{(a,b,c,d,e) \in S} \big(\relaxn{a}{b}{c}{d}{e}\big)
% \label{eq:soft}
% \end{eqnarray}

% We first use \emph{Stochastic Local Search} (SLS) on these formulas. An SLS solver attempts to minimize the number of falsified clauses, without any guarantee of optimality except for when a fully satisfying assignment is found. For SLS, we do not need the 
% relaxation variables $\relaxp{a}{b}{c}{d}{e}$ and can remove them from the clauses~(\ref{eq:hard1})-(\ref{eq:hard8}), but they will be used for MaxSAT in~\Cref{sec:maxsat}.  

In terms of software, we tested all the algorithms in {\sf UBCSAT}~\cite{ubcsat} and the DDFW algorithm~\cite{ddfw} turned out to have the best performance.
We ran {\sf UBCSAT} with its default settings, which means it restarts every 100\,000 flips. For some of the harder formulas
this resulted in hundreds of restarts. An interesting observation is that the optimal assignments are harder to find when $n$ is even. 
Observe that best number of falsified clauses matches exactly the conjectured values, apart from $n=30$ where the best found assignment after 12 hours is 
1 above the conjectured value, suggesting that we reached the limit of SLS for this problem.

%\[
%\mu_5(n) \leq \binom{\lfloor n/2\rfloor}{5} + \binom{\lceil n/2\rceil}{5}
%\]

% These results are not enough to establish the upper bound result, because the signotope assignments might not be realizable.

\begin{table}[t]
	\caption{SLS results of formulas for $\mu_5(n)$ showing the best number of falsified clauses and the time, in seconds, to find that bound.}
	\centering
	\begin{tabular}{@{~~}c@{~~}|@{~~}c@{~~}c@{~~}c@{~~}c@{~~}c@{~~}c@{~~}c@{~~}c@{~~}c@{~~}c@{~~}c@{~~}c@{~~}c@{~~}}
		\toprule
		$n$ & $9$ & $10$ & $11$ & $12$ & $13$ & $14$ & $15$ & $16$ & $17$ & $18$ & $19$ & $20$\\ \midrule
		best & 1 & 2 & 7 & 12 & 27 & 42 & 77 & 112 & 182 & 252 & 378 & 504 \\
		time [s] & 0.00 & 0.00 & 0.00 & 0.00 & 0.01 & 0.01 & 0.01 & 0.02 & 0.02 & 2.03 & 0.94 & 174.11\\
		\bottomrule
	\end{tabular}
	
\vspace{10pt}
	
	\begin{tabular}{@{~~}c@{~~}|@{~~}c@{~~}c@{~~}c@{~~}c@{~~}c@{~~}c@{~~}c@{~~}c@{~~}c@{~~}c@{~~}c@{~~}}
		\toprule
		$n$ & $21$ & $22$ & $23$ & $24$ & $25$ & $26$ & $27$ & $28$ & $29$ & $30$\\ \midrule
		best & 714 & 924 & 1254 & 1584 & 2079 & 2574 & 3289 & 4004 & 5005 & 6007 \\
		time [s] & 3.34& 43.92 & 11.64 & 472.33 & 63.48 & 5268.1 & 1555.5 & 1791.9& 467.36 & 18\,244\\
		\bottomrule
	\end{tabular}\label{table:sls}
\end{table}

% \vspace{-19pt}
\section{Realizability}\label{sec:realizability}

Deciding whether a signotope assignment can be realized by a set of points in the plane is a hard combinatorial problem, complete for the complexity class $\exists\mathbb{R}$, which satisfies $\mathrm{NP} \subseteq \exists\mathbb{R} \subseteq\mathrm{PSPACE}$~\cite{Shor}.
To the best of our knowledge, no open-source tools for the realizability problem are publicly available. Because of this, we present as a contribution of independent interest, a simple local search approach to realizability, that has proved effective in this problem.
 We use \texttt{LocalSolver v12.0}~\cite{Gardi2014}, a local search engine that supports floating point variables and provides free academic licenses, in order to find realizations for small numbers of points $(n \leq 16)$.
 Concretely, the problem at hand consists of, given a (not necessarily consistent) assignment to the signotope variables $\sigma(a, b, c), \ldots$, find a set of points $(x_a, y_a), (x_b, y_b), (x_c, y_c), \ldots$ that satisfies them all. We add one constraint per signotope variable, and maximize the minimum distance between any pair of points in order to \emph{regularize} the obtained realizations and avoid floating-point-arithmetic issues. 
 Experimentally, we found that if no constraint is placed on the distance between points, \texttt{LocalSolver} tends towards solutions that place all points extremely close to each other, or in a straight line, where the constraints will be satisfied only due to floating point imprecision.
Therefore, our model consists of\footnote{Constraints are over all indices $a, b, c \in \{1, \ldots, n\}$, with $a < b < c$ in case more than one index is involved.}:
\begin{framed}
\begin{align*}
	\textbf{maximize}  \quad z &\\
	\textbf{subject to}  \quad z & \leq \sqrt{(x_{a} - x_b)^2 + (y_a - y_b)^2}, \\ %\quad \forall\, a < b \in [n]^2\\
	%\forall \,a < b < c \in [n]^3, 
	\quad \varepsilon  &  <\sigma^\star(a,b,c) \cdot [(y_c - y_a)(x_b - x_a) - (x_c - x_a)(y_b - y_a)],\\
	0 &\leq x_{a} \leq K, \\ %\quad \forall\, a \in [n]\\
	 0 &\leq y_a \leq K, \\ %\quad \forall\, a \in [n]\\
\end{align*}
\end{framed}

\noindent where $\sigma^\star(a, b, c) = 1$ if $\sigma(a, b, c)$  and $-1$ otherwise. The parameters $\varepsilon$ and $K$ are experimentally determined, and also contribute towards avoiding \emph{degenerate} solutions due to floating point arithmetic. In particular, setting a value of $\varepsilon$ that is too close to $0$ (say, $\varepsilon = 10^{-10}$) allows for degenerate solutions that only satisfy constraints due to floating-point-arithmetic quirks, whereas setting e.g., $\varepsilon = 10^{-3}$ may result in an unfeasible set of constraints even if the signotope assignment is realizable.
%Note as well that the $\sqrt{\cdot}$ can be omitted, which makes the optimization program above a \emph{quadratically constrained quadratic program} (QCQP)~\cite{Wang2022}. 
By implementing this model in~\texttt{LocalSolver}, we obtained the realizations depicted in~\Cref{fig:realization-pinwheel} and~\Cref{fig:realization-parabolic}. 
We remark that, consistently with the asymptotic result stating that most signotope assignments are \emph{not} realizable~\cite{Bjorner_Las_Vergnas_Sturmfels_White_Ziegler_1999,Shor}, only about 5\% of the 10\,000 different assignments we found through SLS for $n \in \{12, 14, 16\}$ led to realizations (in under 100s). 
\begin{figure}
	\centering
	\begin{subfigure}{0.45\textwidth}
		\centering
		\begin{tikzpicture}[scale=0.8]
			\begin{axis}[
				xmin=0,xmax=10,
				ymin=0,ymax=10,
				grid=both,
				grid style={line width=.1pt, draw=gray!10},
				major grid style={line width=.2pt,draw=gray!50},
				axis lines=middle,
				minor tick num=5,
				enlargelimits={abs=0.5},
				axis line style={latex-latex},
				ticklabel style={font=\tiny,fill=white},
				xlabel style={at={(ticklabel* cs:1)},anchor=north west},
				ylabel style={at={(ticklabel* cs:1)},anchor=south west}
			]
			\node[inner sep=1pt, circle, thick, fill=blue, scale=1.25] (0) at (axis cs:0.030294298034345045, 0.04931397512102001) {};
			\node[inner sep=1pt, circle, thick, fill=blue, scale=1.25] (1) at (axis cs:0.07899545114343348, 9.945540802574182) {};
			\node[inner sep=1pt, circle, thick, fill=blue, scale=1.25] (2) at (axis cs:0.7591671583605977, 1.0044792058518814) {};
			\node[inner sep=1pt, circle, thick, fill=blue, scale=1.25] (3) at (axis cs:0.9212051954636317, 9.111754169053281) {};
			\node[inner sep=1pt, circle, thick, fill=blue, scale=1.25] (4) at (axis cs:1.8491042254725392, 7.706062957498922) {};
			\node[inner sep=1pt, circle, thick, fill=blue, scale=1.25] (5) at (axis cs:5.274267803605714, 8.111199009860846) {};
			\node[inner sep=1pt, circle, thick, fill=blue, scale=1.25] (6) at (axis cs:5.30632067190972, 2.0436696186794925) {};
			\node[inner sep=1pt, circle, thick, fill=blue, scale=1.25] (7) at (axis cs:8.344774347606368, 2.49760898505815) {};
			\node[inner sep=1pt, circle, thick, fill=blue, scale=1.25] (8) at (axis cs:9.161405320253825, 8.87947420201652) {};
			\node[inner sep=1pt, circle, thick, fill=blue, scale=1.25] (9) at (axis cs:9.260791812948357, 1.0148059628612955) {};
			\node[inner sep=1pt, circle, thick, fill=blue, scale=1.25] (10) at (axis cs:9.937351727262362, 0.04266311256750367) {};
			\node[inner sep=1pt, circle, thick, fill=blue, scale=1.25] (11) at (axis cs:9.967084633256029, 9.95924727471215) {};
			\draw[dashed] (0) -- (2) -- (6);
			\draw[dashed] (7) -- (9) -- (10);
			\draw[dashed] (1) -- (3) -- (4);
			\draw[dashed] -- (5) -- (8) -- (11);
			\end{axis}
		\end{tikzpicture}
	\caption{A realization obtained for the problem instance $\mu_5(12) = 12$ that inspires the~\emph{pinwheel construction}.}
	\label{fig:realization-pinwheel}
	\end{subfigure}
	\hfill
	\begin{subfigure}{0.45\textwidth}
		\centering
		\begin{tikzpicture}[scale=0.8]
			\begin{axis}[
				xmin=0,xmax=6,
				ymin=0,ymax=10,
				grid=both,
				grid style={line width=.1pt, draw=gray!10},
				major grid style={line width=.2pt,draw=gray!50},
				axis lines=middle,
				minor tick num=5,
				enlargelimits={abs=0.5},
				axis line style={latex-latex},
				ticklabel style={font=\tiny,fill=white},
				xlabel style={at={(ticklabel* cs:1)},anchor=north west},
				ylabel style={at={(ticklabel* cs:1)},anchor=south west}
			]
			
			\node[inner sep=1pt, circle, thick, fill=blue, scale=1.25] (0) at (axis cs:0.09323182225641637, 6.477855528985496) {};
			\node[inner sep=1pt, circle, thick, fill=blue, scale=1.25] (1) at (axis cs:1.620088260004434, 9.999998772471152) {};
			\node[inner sep=1pt, circle, thick, fill=blue, scale=1.25] (2) at (axis cs:1.6596751176894275, 5.922476584812963) {};
			\node[inner sep=1pt, circle, thick, fill=blue, scale=1.25] (3) at (axis cs:2.1406444694008675, 8.54746826965688) {};
			\node[inner sep=1pt, circle, thick, fill=blue, scale=1.25] (4) at (axis cs:2.6608857055560406, 7.09571344851647) {};
			\node[inner sep=1pt, circle, thick, fill=blue, scale=1.25] (5) at (axis cs:3.184473827891598, 5.644365240201632) {};
			\node[inner sep=1pt, circle, thick, fill=blue, scale=1.25] (6) at (axis cs:3.184483311988519, 4.0955688923297355) {};
			\node[inner sep=1pt, circle, thick, fill=blue, scale=1.25] (7) at (axis cs:4.163218713619542, 2.9039293209943935) {};
			\node[inner sep=1pt, circle, thick, fill=blue, scale=1.25] (8) at (axis cs:4.5397328695711785, 4.901343442269086) {};
			\node[inner sep=1pt, circle, thick, fill=blue, scale=1.25] (9) at (axis cs:4.683411028572396, 1.4522791761844014) {};
			\node[inner sep=1pt, circle, thick, fill=blue, scale=1.25] (10) at (axis cs:5.203829185711722, 2.026775433462919e-05) {};
			\node[inner sep=1pt, circle, thick, fill=blue, scale=1.25] (11) at (axis cs:6.0318920497203585, 4.372302451332322) {};

			\draw[dashed] (0) -- (2) -- (6) -- (7) -- (9) -- (10);
			\draw[dashed] (1) -- (3) -- (4) -- (5) -- (8) -- (11);
			\end{axis}
		\end{tikzpicture}
	\caption{A different realization obtained for the problem instance $\mu_5(12) = 12$ that inspires the~\emph{parabolic construction}.}
	\label{fig:realization-parabolic}
	\end{subfigure}
	
\caption{Illustration of two different realizations obtained with \texttt{LocalSolver}. Both realizations come from a single signotope orientation obtained through SLS under different executions.
 Dashed lines are for illustrative purposes only, in order to showcase the similarity with~\Cref{fig:construction-1} and~\Cref{fig:construction-2}.}
\end{figure}
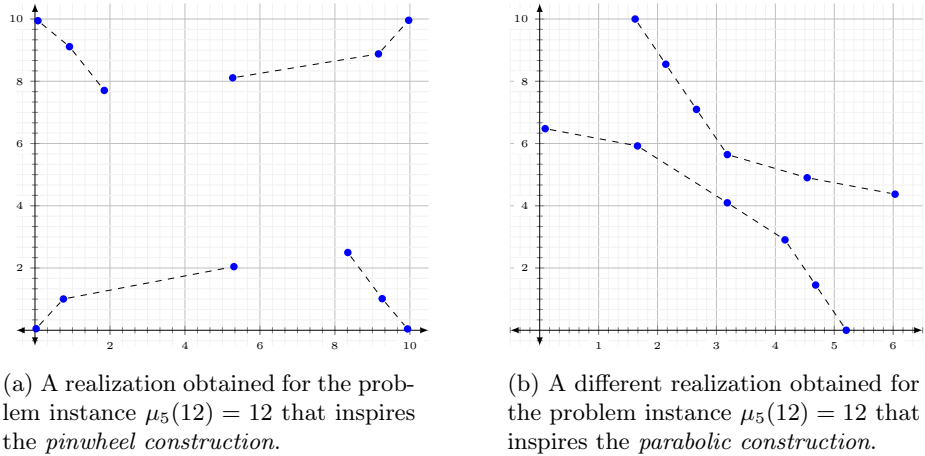

\section{Constructions}\label{sec:constructions}
% We now state our main theorem and the conjecture at the heart of this article. 
We present two different constructions achieving a common bound: (i) the \emph{pinwheel construction} in~\Cref{subsec:pinwheel}, which generalizes the realization depicted in~\Cref{fig:realization-pinwheel}, and (ii) the \emph{parabolic construction} in~\Cref{subsec:parabolic}, which generalizes the realization depicted in~\Cref{fig:realization-parabolic}.

\subsection{The Pinwheel Construction}\label{subsec:pinwheel}

This construction, illustrated in~\Cref{fig:construction-1} requires the number of points $n=4k$ to be a multiple of $4$. It consists of four \emph{spokes} defined as follows: 
\begin{align*}
	S_1 &= \{ (k+j, -j(k-j)/k^3 + 1) \mid j \in \{0, 1, \ldots, k-1\}\} \\
	S_2 &= \{ ( -(k+j), j(k-j)/k^3 - 1) \mid j \in \{0, 1, \ldots, k-1\}\} \\
	S_3 &= \{ ( j(k-j)/k^3 - 1, k+j) \mid j \in \{0, 1, \ldots, k-1\}\} \\
	S_4 &= \{ ( -j(k-j)/k^3 + 1, -(k+j)) \mid j \in \{0, 1, \ldots, k-1\}\}.
\end{align*}

% An illustration of the pinwheel construction is presented in~\Cref{fig:construction-1}.
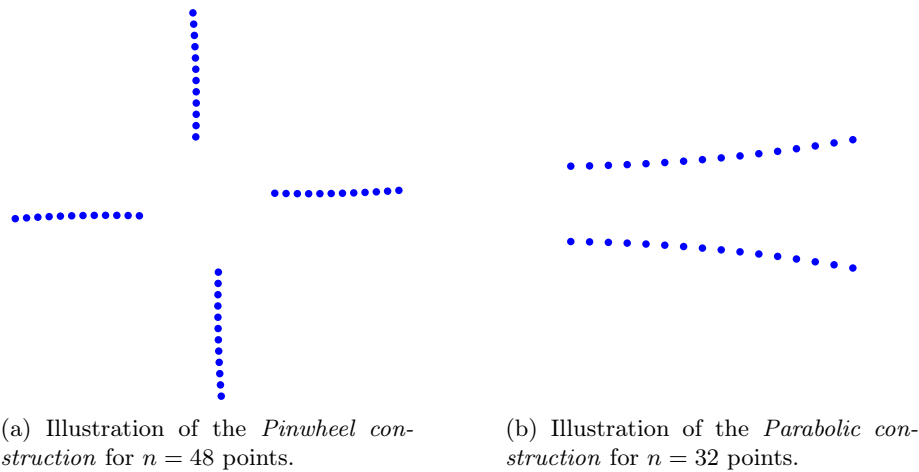
\begin{figure}[ht]
	\begin{subfigure}{0.45\textwidth}
	\centering
			\begin{tikzpicture}[scale=0.15]
				\node[inner sep=1pt, circle, thick, fill=blue, scale=1] at (6.0, 1.0) {};
				\node[inner sep=1pt, circle, thick, fill=blue, scale=1] at (-6.0, -1.0) {};
				\node[inner sep=1pt, circle, thick, fill=blue, scale=1] at (-1.0, 6.0) {};
				\node[inner sep=1pt, circle, thick, fill=blue, scale=1] at (1.0, -6.0) {};
				\node[inner sep=1pt, circle, thick, fill=blue, scale=1] at (7.0, 0.9768518518518519) {};
				\node[inner sep=1pt, circle, thick, fill=blue, scale=1] at (-7.0, -0.9768518518518519) {};
				\node[inner sep=1pt, circle, thick, fill=blue, scale=1] at (-0.9768518518518519, 7.0) {};
				\node[inner sep=1pt, circle, thick, fill=blue, scale=1] at (0.9768518518518519, -7.0) {};
				\node[inner sep=1pt, circle, thick, fill=blue, scale=1] at (8.0, 0.962962962962963) {};
				\node[inner sep=1pt, circle, thick, fill=blue, scale=1] at (-8.0, -0.962962962962963) {};
				\node[inner sep=1pt, circle, thick, fill=blue, scale=1] at (-0.962962962962963, 8.0) {};
				\node[inner sep=1pt, circle, thick, fill=blue, scale=1] at (0.962962962962963, -8.0) {};
				\node[inner sep=1pt, circle, thick, fill=blue, scale=1] at (9.0, 0.9583333333333334) {};
				\node[inner sep=1pt, circle, thick, fill=blue, scale=1] at (-9.0, -0.9583333333333334) {};
				\node[inner sep=1pt, circle, thick, fill=blue, scale=1] at (-0.9583333333333334, 9.0) {};
				\node[inner sep=1pt, circle, thick, fill=blue, scale=1] at (0.9583333333333334, -9.0) {};
				\node[inner sep=1pt, circle, thick, fill=blue, scale=1] at (10.0, 0.962962962962963) {};
				\node[inner sep=1pt, circle, thick, fill=blue, scale=1] at (-10.0, -0.962962962962963) {};
				\node[inner sep=1pt, circle, thick, fill=blue, scale=1] at (-0.962962962962963, 10.0) {};
				\node[inner sep=1pt, circle, thick, fill=blue, scale=1] at (0.962962962962963, -10.0) {};
				\node[inner sep=1pt, circle, thick, fill=blue, scale=1] at (11.0, 0.9768518518518519) {};
				\node[inner sep=1pt, circle, thick, fill=blue, scale=1] at (-11.0, -0.9768518518518519) {};
				\node[inner sep=1pt, circle, thick, fill=blue, scale=1] at (-0.9768518518518519, 11.0) {};
				\node[inner sep=1pt, circle, thick, fill=blue, scale=1] at (0.9768518518518519, -11.0) {};
				\node[inner sep=1pt, circle, thick, fill=blue, scale=1] at (12.0, 1.0) {};
				\node[inner sep=1pt, circle, thick, fill=blue, scale=1] at (-12.0, -1.0) {};
				\node[inner sep=1pt, circle, thick, fill=blue, scale=1] at (-1.0, 12.0) {};
				\node[inner sep=1pt, circle, thick, fill=blue, scale=1] at (1.0, -12.0) {};
				\node[inner sep=1pt, circle, thick, fill=blue, scale=1] at (13.0, 1.0324074074074074) {};
				\node[inner sep=1pt, circle, thick, fill=blue, scale=1] at (-13.0, -1.0324074074074074) {};
				\node[inner sep=1pt, circle, thick, fill=blue, scale=1] at (-1.0324074074074074, 13.0) {};
				\node[inner sep=1pt, circle, thick, fill=blue, scale=1] at (1.0324074074074074, -13.0) {};
				\node[inner sep=1pt, circle, thick, fill=blue, scale=1] at (14.0, 1.074074074074074) {};
				\node[inner sep=1pt, circle, thick, fill=blue, scale=1] at (-14.0, -1.074074074074074) {};
				\node[inner sep=1pt, circle, thick, fill=blue, scale=1] at (-1.074074074074074, 14.0) {};
				\node[inner sep=1pt, circle, thick, fill=blue, scale=1] at (1.074074074074074, -14.0) {};
				\node[inner sep=1pt, circle, thick, fill=blue, scale=1] at (15.0, 1.125) {};
				\node[inner sep=1pt, circle, thick, fill=blue, scale=1] at (-15.0, -1.125) {};
				\node[inner sep=1pt, circle, thick, fill=blue, scale=1] at (-1.125, 15.0) {};
				\node[inner sep=1pt, circle, thick, fill=blue, scale=1] at (1.125, -15.0) {};
				\node[inner sep=1pt, circle, thick, fill=blue, scale=1] at (16.0, 1.1851851851851851) {};
				\node[inner sep=1pt, circle, thick, fill=blue, scale=1] at (-16.0, -1.1851851851851851) {};
				\node[inner sep=1pt, circle, thick, fill=blue, scale=1] at (-1.1851851851851851, 16.0) {};
				\node[inner sep=1pt, circle, thick, fill=blue, scale=1] at (1.1851851851851851, -16.0) {};
				\node[inner sep=1pt, circle, thick, fill=blue, scale=1] at (17.0, 1.2546296296296298) {};
				\node[inner sep=1pt, circle, thick, fill=blue, scale=1] at (-17.0, -1.2546296296296298) {};
				\node[inner sep=1pt, circle, thick, fill=blue, scale=1] at (-1.2546296296296298, 17.0) {};
				\node[inner sep=1pt, circle, thick, fill=blue, scale=1] at (1.2546296296296298, -17.0) {};
			\end{tikzpicture}
			\caption{ Illustration of the \emph{Pinwheel construction} for $n=48$ points.}
			\label{fig:construction-1}
		\end{subfigure}
		\hfill
		\begin{subfigure}{0.45\textwidth}
			\centering
			\begin{tikzpicture}[scale=0.25]
		\node[inner sep=1pt, circle, thick, fill=blue, scale=1] at (1.0, 2.00552427172802) {};
		\node[inner sep=1pt, circle, thick, fill=blue, scale=1] at (1.0, -2.00552427172802) {};
		\node[inner sep=1pt, circle, thick, fill=blue, scale=1] at (2.0, 2.02209708691208) {};
		\node[inner sep=1pt, circle, thick, fill=blue, scale=1] at (2.0, -2.02209708691208) {};
		\node[inner sep=1pt, circle, thick, fill=blue, scale=1] at (3.0, 2.049718445552179) {};
		\node[inner sep=1pt, circle, thick, fill=blue, scale=1] at (3.0, -2.049718445552179) {};
		\node[inner sep=1pt, circle, thick, fill=blue, scale=1] at (4.0, 2.0883883476483183) {};
		\node[inner sep=1pt, circle, thick, fill=blue, scale=1] at (4.0, -2.0883883476483183) {};
		\node[inner sep=1pt, circle, thick, fill=blue, scale=1] at (5.0, 2.1381067932004973) {};
		\node[inner sep=1pt, circle, thick, fill=blue, scale=1] at (5.0, -2.1381067932004973) {};
		\node[inner sep=1pt, circle, thick, fill=blue, scale=1] at (6.0, 2.1988737822087163) {};
		\node[inner sep=1pt, circle, thick, fill=blue, scale=1] at (6.0, -2.1988737822087163) {};
		\node[inner sep=1pt, circle, thick, fill=blue, scale=1] at (7.0, 2.270689314672975) {};
		\node[inner sep=1pt, circle, thick, fill=blue, scale=1] at (7.0, -2.270689314672975) {};
		\node[inner sep=1pt, circle, thick, fill=blue, scale=1] at (8.0, 2.353553390593274) {};
		\node[inner sep=1pt, circle, thick, fill=blue, scale=1] at (8.0, -2.353553390593274) {};
		\node[inner sep=1pt, circle, thick, fill=blue, scale=1] at (9.0, 2.447466009969612) {};
		\node[inner sep=1pt, circle, thick, fill=blue, scale=1] at (9.0, -2.447466009969612) {};
		\node[inner sep=1pt, circle, thick, fill=blue, scale=1] at (10.0, 2.5524271728019903) {};
		\node[inner sep=1pt, circle, thick, fill=blue, scale=1] at (10.0, -2.5524271728019903) {};
		\node[inner sep=1pt, circle, thick, fill=blue, scale=1] at (11.0, 2.6684368790904083) {};
		\node[inner sep=1pt, circle, thick, fill=blue, scale=1] at (11.0, -2.6684368790904083) {};
		\node[inner sep=1pt, circle, thick, fill=blue, scale=1] at (12.0, 2.795495128834866) {};
		\node[inner sep=1pt, circle, thick, fill=blue, scale=1] at (12.0, -2.795495128834866) {};
		\node[inner sep=1pt, circle, thick, fill=blue, scale=1] at (13.0, 2.9336019220353635) {};
		\node[inner sep=1pt, circle, thick, fill=blue, scale=1] at (13.0, -2.9336019220353635) {};
		\node[inner sep=1pt, circle, thick, fill=blue, scale=1] at (14.0, 3.0827572586919008) {};
		\node[inner sep=1pt, circle, thick, fill=blue, scale=1] at (14.0, -3.0827572586919008) {};
		\node[inner sep=1pt, circle, thick, fill=blue, scale=1] at (15.0, 3.242961138804478) {};
		\node[inner sep=1pt, circle, thick, fill=blue, scale=1] at (15.0, -3.242961138804478) {};
		\node[inner sep=1pt, circle, thick, fill=blue, scale=1] at (16.0, 3.414213562373095) {};
		\node[inner sep=1pt, circle, thick, fill=blue, scale=1] at (16.0, -3.414213562373095) {};
		\node[] at (5, -10) {};		
	\end{tikzpicture}
			\caption{Illustration of the \emph{Parabolic construction} for $n = 32$ points.}\label{fig:construction-2}
		\end{subfigure}
	\caption{Illustration of the constructions achieving the bound of~\Cref{thm:upper-bounds}.}\label{fig:constructions}
\end{figure} 
% Note that the non-linear term $j(k-j)/k^3$ introduces a slight curvature which guarantees the set of points is in general position.

% \input{sketch-proof-pinwheel}

\begin{proposition}
	The number of convex $5$-gons obtained by applying the pinwheel construction on $n = 4k$ points is exactly $2\binom{2k}{5}$.
\end{proposition}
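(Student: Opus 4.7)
The plan is to classify each 5-subset $T$ of the $4k$ pinwheel points by its \emph{spoke composition} $(n_1, n_2, n_3, n_4)$ with $n_i = |T \cap S_i|$, and to determine which compositions yield convex pentagons. As a warm-up, observe that each spoke $S_i$ lies on a strictly convex or strictly concave parabolic arc — for instance, the $S_1$-points $(k+j,\,1-j(k-j)/k^3)$ have $y$ as a strictly convex function of $j$ — so every 5-subset contained in a single spoke forms a convex pentagon, contributing $4\binom{k}{5}$ to the total.

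For 5-subsets drawn from exactly two spokes (both $n_i, n_j \geq 1$), I would use the 4-fold rotational symmetry $(x,y)\mapsto(-y,x)$, which cyclically permutes $S_1 \to S_3 \to S_2 \to S_4 \to S_1$, to reduce to one representative adjacent pair (say $S_1 \cup S_3$) and one representative opposite pair (say $S_1 \cup S_2$). The crux is a \emph{chirality lemma} for adjacent pairs: listing them as $(S_1,S_3),(S_3,S_2),(S_2,S_4),(S_4,S_1)$ in CCW order, a 5-subset with $a$ points from the CCW-earlier spoke and $b=5-a$ from the CCW-later spoke (with $a,b\geq 1$) is convex iff $a\geq 3$. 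The positive direction ($a\in\{3,4\}$) can be proved by checking that the $x$-sorted 5-tuple realizes one of the Szekeres--Peters cases I--IV (Figure~\ref{fig:cases}), which reduces to evaluating a bounded number of signotopes from the closed-form coordinates. The negative direction ($a\in\{1,2\}$) is established by exhibiting a specific middle vertex of the concave arc $S_j$ that lies strictly inside the triangle spanned by the extreme points of the subset. For opposite pairs and for 5-subsets whose support spans three or more spokes, one similarly exhibits an interior point witnessing non-convexity; the 4-fold symmetry again reduces these to finitely many representative cases.

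Since any 5-subset with points in exactly two spokes belongs to a unique adjacent pair, the contributions do not overlap with each other or with the single-spoke contribution. Summing and applying Vandermonde's identity $\binom{2k}{5}=\sum_{a=0}^{5}\binom{k}{a}\binom{k}{5-a}$ together with the symmetry $\binom{k}{a}\binom{k}{5-a}=\binom{k}{5-a}\binom{k}{a}$ yields
\[
4\binom{k}{5} \;+\; 4\Bigl(k\binom{k}{4}+\binom{k}{3}\binom{k}{2}\Bigr) \;=\; 2\sum_{a=0}^{5}\binom{k}{a}\binom{k}{5-a} \;=\; 2\binom{2k}{5}.
\]

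The principal obstacle is the chirality lemma: the asymmetry $a\geq 3$ vs.\ $a\leq 2$ reflects the pinwheel's handedness and demands careful bookkeeping of signotopes for the convex direction together with explicit interior-point certificates for the non-convex direction. The opposite-pair and multi-spoke non-convexity arguments are conceptually simpler but contribute many subcases that the exposition must organize compactly via the rotational symmetry.
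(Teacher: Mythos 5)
Your proposal is correct and follows essentially the same route as the paper's proof: classify each $5$-subset by its spoke composition, observe that the convex pentagons are exactly those lying in one spoke or split $4{+}1$ / $3{+}2$ across an adjacent pair with the majority in the counterclockwise-earlier spoke, and close with Vandermonde's identity to get $4\bigl(\binom{k}{5}+\binom{k}{4}\binom{k}{1}+\binom{k}{3}\binom{k}{2}\bigr)=2\binom{2k}{5}$. Your explicit use of the $4$-fold rotational symmetry and the signotope/interior-point verification strategy merely makes precise what the paper's sketch asserts by appeal to the figures.
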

% \begin{proof}[Proof sketch, illustrated in~\Cref{fig:construction-1-proof-sketch}]
% The proof is by cases according to which spokes contain the $5$ points of an arbitrary pentagon.
%  If all five points are contained in the same spoke $S_i$, then the pentagon is convex due to the curvature of the spokes.
%  If four points are contained in a spoke $S_i$, then in order to form a convex $5$-gon the fifth point must be in the next spoke counterclockwise, $S_{(i+1)\%4}$. Similarly, if a spoke contains 3 points, and the next spoke counterclockwise contains the remaining $2$ points, then the pentagon is convex. Finally, every other case yields a non-convex $5$-gon. 
%  As there are $4$ ways of choosing the unique spoke with the largest number of points in a convex $5$-gon, this results in 
%  \[
% 	4 \cdot \left( \binom{k}{5} + \binom{k}{4} \cdot \binom{k}{1} + \binom{k}{3} \cdot \binom{k}{2} \right) = 2 \binom{2k}{5},	
% \]
% where the equality follows by Vandermonde's identity.
% \end{proof}

\begin{proof}[Proof sketch, illustrated in~\Cref{fig:construction-1-proof-sketch}]
    The proof is by cases according to which spokes contain the $5$ points of an arbitrary pentagon.
     If all five points are contained in the same spoke $S_i$, then the pentagon is convex due to the curvature of the spokes.
     If four points are contained in a spoke $S_i$, then to make a convex $5$-gon the fifth point must be in the next spoke counterclockwise, $S_{(i+1)\%4}$. Similarly, if a spoke contains 3 points, and the next spoke counterclockwise contains the remaining $2$ points, then the pentagon is convex. Finally, every other case yields a non-convex $5$-gon. 
     As there are $4$ ways of choosing the unique spoke with the largest number of points in a convex $5$-gon, this results in 
     \[
        4 \cdot \left( \binom{k}{5} + \binom{k}{4} \cdot \binom{k}{1} + \binom{k}{3} \cdot \binom{k}{2} \right) = 2 \binom{2k}{5},	
    \]
    where the equality follows by Vandermonde's identity.
    \end{proof}

\input{sketch-proof-pinwheel}

\begin{figure}
\begin{subfigure}{0.3\textwidth}
	\begin{tikzpicture}[scale=0.3]
\node[inner sep=1pt, circle, thick, fill=blue, scale=1] at (1, 2.05) {};
\node[inner sep=1pt, circle, thick, fill=blue, scale=1] at (1, -2.05) {};
\node[inner sep=1pt, circle, thick, fill=blue, scale=1] at (2, 2.2) {};
\node[inner sep=1pt, circle, thick, fill=blue, scale=1] at (2, -2.2) {};
\node[inner sep=1pt, circle, thick, fill=blue, scale=1] at (3, 2.45) {};
\node[inner sep=1pt, circle, thick, fill=blue, scale=1] at (3, -2.45) {};
\node[inner sep=1pt, circle, thick, fill=blue, scale=1] at (4, 2.8) {};
\node[inner sep=1pt, circle, thick, fill=blue, scale=1] at (4, -2.8) {};
\node[inner sep=1pt, circle, thick, fill=blue, scale=1] at (5, 3.25) {};
\node[inner sep=1pt, circle, thick, fill=blue, scale=1] at (5, -3.25) {};
\node[inner sep=1pt, circle, thick, fill=blue, scale=1] at (6, 3.8) {};
\node[inner sep=1pt, circle, thick, fill=blue, scale=1] at (6, -3.8) {};
\node[inner sep=1pt, circle, thick, fill=blue, scale=1] at (7, 4.45) {};
\node[inner sep=1pt, circle, thick, fill=blue, scale=1] at (7, -4.45) {};
\node[inner sep=1pt, circle, thick, fill=blue, scale=1] at (8, 5.2) {};
\node[inner sep=1pt, circle, thick, fill=blue, scale=1] at (8, -5.2) {};
\node[inner sep=1pt, circle, thick, fill=blue, scale=1] at (9, 6.05) {};
\node[inner sep=1pt, circle, thick, fill=blue, scale=1] at (9, -6.05) {};
\node[inner sep=1pt, circle, thick, fill=blue, scale=1] at (10, 7.0) {};
\node[inner sep=1pt, circle, thick, fill=blue, scale=1] at (10, -7.0) {};
\node[draw, circle, fill=green, inner sep=0pt, minimum width=2mm] at (1, -2.05) {};
\coordinate (a) at (1, -2.05);
\node[above, yshift=2mm] at (a) {$a$};
\node[draw, circle, fill=green, inner sep=0pt, minimum width=2mm] at (3, -2.45) {};
\coordinate (b) at (3, -2.45);
\node[above, yshift=2mm] at (b) {$b$};
\node[draw, circle, fill=green, inner sep=0pt, minimum width=2mm] at (7, -4.45) {};
\coordinate (c) at (7, -4.45);
\node[above, yshift=2mm] at (c) {$c$};
\node[draw, circle, fill=green, inner sep=0pt, minimum width=2mm] at (9, -6.05) {};
\coordinate (d) at (9, -6.05);
\node[above, yshift=2mm] at (d) {$d$};
\node[draw, circle, fill=green, inner sep=0pt, minimum width=2mm] at (10, -7.0) {};
\coordinate (e) at (10, -7.0);
\node[above, yshift=2mm] at (e) {$e$};
\fill[green, opacity=0.4] (a) -- (b) -- (c) -- (d) -- (e) -- cycle;
\draw[black, thick] (a) -- (b) -- (c) -- (d) -- (e) -- cycle;
	\end{tikzpicture}
	\caption{\textbf{Case 1:} all five points on the same curve necessarily create a convex $5$-gon.}
	\end{subfigure}
	\hfill
	\begin{subfigure}{0.3\textwidth}
		\begin{tikzpicture}[scale=0.3]
\node[inner sep=1pt, circle, thick, fill=blue, scale=1] at (1, 2.05) {};
\node[inner sep=1pt, circle, thick, fill=blue, scale=1] at (1, -2.05) {};
\node[inner sep=1pt, circle, thick, fill=blue, scale=1] at (2, 2.2) {};
\node[inner sep=1pt, circle, thick, fill=blue, scale=1] at (2, -2.2) {};
\node[inner sep=1pt, circle, thick, fill=blue, scale=1] at (3, 2.45) {};
\node[inner sep=1pt, circle, thick, fill=blue, scale=1] at (3, -2.45) {};
\node[inner sep=1pt, circle, thick, fill=blue, scale=1] at (4, 2.8) {};
\node[inner sep=1pt, circle, thick, fill=blue, scale=1] at (4, -2.8) {};
\node[inner sep=1pt, circle, thick, fill=blue, scale=1] at (5, 3.25) {};
\node[inner sep=1pt, circle, thick, fill=blue, scale=1] at (5, -3.25) {};
\node[inner sep=1pt, circle, thick, fill=blue, scale=1] at (6, 3.8) {};
\node[inner sep=1pt, circle, thick, fill=blue, scale=1] at (6, -3.8) {};
\node[inner sep=1pt, circle, thick, fill=blue, scale=1] at (7, 4.45) {};
\node[inner sep=1pt, circle, thick, fill=blue, scale=1] at (7, -4.45) {};
\node[inner sep=1pt, circle, thick, fill=blue, scale=1] at (8, 5.2) {};
\node[inner sep=1pt, circle, thick, fill=blue, scale=1] at (8, -5.2) {};
\node[inner sep=1pt, circle, thick, fill=blue, scale=1] at (9, 6.05) {};
\node[inner sep=1pt, circle, thick, fill=blue, scale=1] at (9, -6.05) {};
\node[inner sep=1pt, circle, thick, fill=blue, scale=1] at (10, 7.0) {};
\node[inner sep=1pt, circle, thick, fill=blue, scale=1] at (10, -7.0) {};
\node[draw, circle, fill=red, inner sep=0pt, minimum width=2mm] at (3, 2.45) {};
\coordinate (a) at (3, 2.45);
\node[above, yshift=1mm] at (a) {$b$};
\node[draw, circle, fill=red, inner sep=0pt, minimum width=2mm] at (4, 2.8) {};
\coordinate (b) at (4, 2.8);
\node[above, yshift=1mm] at (b) {$c$};
\node[draw, circle, fill=red, inner sep=0pt, minimum width=2mm] at (8, -5.2) {};
\coordinate (c) at (8, -5.2);
\node[below, yshift=-1mm] at (c) {$e$};
\node[draw, circle, fill=red, inner sep=0pt, minimum width=2mm] at (7, -4.45) {};
\coordinate (d) at (7, -4.45);
\node[below, yshift=-1mm] at (d) {$d$};
\node[draw, circle, fill=red, inner sep=0pt, minimum width=2mm] at (1, -2.05) {};
\coordinate (e) at (1, -2.05);
\node[above, yshift=-5mm] at (e) {$a$};
\fill[red, opacity=0.4] (a) -- (b) -- (c) -- (d) -- (e) -- cycle;
\draw[black, thick] (a) -- (b) -- (c) -- (d) -- (e) -- cycle;
	\end{tikzpicture}
	\caption{\textbf{Case 2:} three points on one curve and two on the other never make a convex $5$-gon.}
	\end{subfigure}
	\hfill
	\begin{subfigure}{0.3\textwidth}
	\begin{tikzpicture}[scale=0.3]
\node[inner sep=1pt, circle, thick, fill=blue, scale=1] at (1, 2.05) {};
\node[inner sep=1pt, circle, thick, fill=blue, scale=1] at (1, -2.05) {};
\node[inner sep=1pt, circle, thick, fill=blue, scale=1] at (2, 2.2) {};
\node[inner sep=1pt, circle, thick, fill=blue, scale=1] at (2, -2.2) {};
\node[inner sep=1pt, circle, thick, fill=blue, scale=1] at (3, 2.45) {};
\node[inner sep=1pt, circle, thick, fill=blue, scale=1] at (3, -2.45) {};
\node[inner sep=1pt, circle, thick, fill=blue, scale=1] at (4, 2.8) {};
\node[inner sep=1pt, circle, thick, fill=blue, scale=1] at (4, -2.8) {};
\node[inner sep=1pt, circle, thick, fill=blue, scale=1] at (5, 3.25) {};
\node[inner sep=1pt, circle, thick, fill=blue, scale=1] at (5, -3.25) {};
\node[inner sep=1pt, circle, thick, fill=blue, scale=1] at (6, 3.8) {};
\node[inner sep=1pt, circle, thick, fill=blue, scale=1] at (6, -3.8) {};
\node[inner sep=1pt, circle, thick, fill=blue, scale=1] at (7, 4.45) {};
\node[inner sep=1pt, circle, thick, fill=blue, scale=1] at (7, -4.45) {};
\node[inner sep=1pt, circle, thick, fill=blue, scale=1] at (8, 5.2) {};
\node[inner sep=1pt, circle, thick, fill=blue, scale=1] at (8, -5.2) {};
\node[inner sep=1pt, circle, thick, fill=blue, scale=1] at (9, 6.05) {};
\node[inner sep=1pt, circle, thick, fill=blue, scale=1] at (9, -6.05) {};
\node[inner sep=1pt, circle, thick, fill=blue, scale=1] at (10, 7.0) {};
\node[inner sep=1pt, circle, thick, fill=blue, scale=1] at (10, -7.0) {};
\node[draw, circle, fill=red, inner sep=0pt, minimum width=2mm] at (3, -2.45) {};
\coordinate (a) at (3, -2.45);
\node[above, yshift=2mm] at (a) {$a$};
\node[draw, circle, fill=red, inner sep=0pt, minimum width=2mm] at (8, 5.2) {};
\coordinate (b) at (8, 5.2);
\node[right, yshift=-2mm] at (b) {$b$};
\node[draw, circle, fill=red, inner sep=0pt, minimum width=2mm] at (10, -7.0) {};
\coordinate (c) at (10, -7.0);
\node[right, yshift=2mm] at (c) {$e$};
\node[draw, circle, fill=red, inner sep=0pt, minimum width=2mm] at (9, -6.05) {};
\coordinate (d) at (9, -6.05);
\node[below, yshift=-1mm, xshift=-1mm] at (d) {$d$};
\node[draw, circle, fill=red, inner sep=0pt, minimum width=2mm] at (8.1, -5.2) {};
\coordinate (e) at (8.1, -5.2);
\node[below, yshift=-1mm] at (e) {$c$};
\fill[red, opacity=0.4] (a) -- (b) -- (c) -- (d) -- (e) -- cycle;
\draw[black, thick] (a) -- (b) -- (c) -- (d) -- (e) -- cycle;
	\end{tikzpicture}
	\caption{\textbf{Case 3:} four points on one curve and a single point on the other never make a convex $5$-gon.}
	\end{subfigure}
	\caption{Illustration of the proof sketch for the parabolic construction. Curvature of $L^\top$ and $L^\bot$ has been scaled for illustrative purposes.}
	\label{fig:construction-2-proof-sketch}
\end{figure}
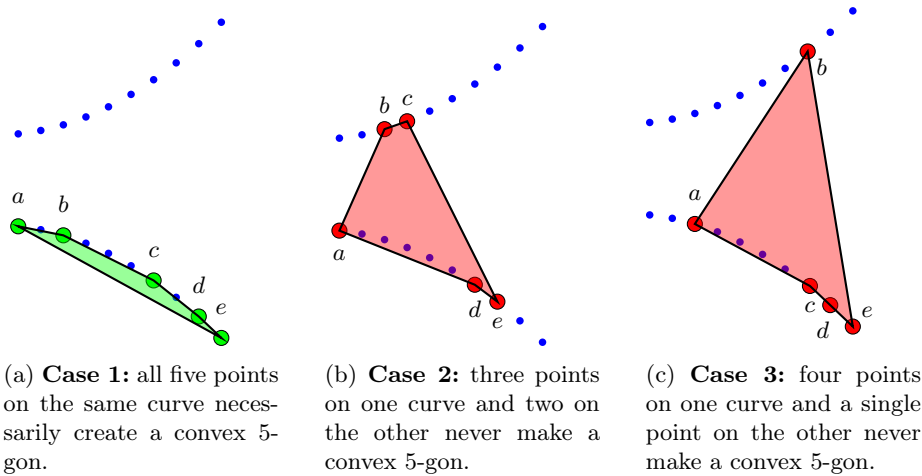
\begin{proof}[Proof sketch, illustrated in~\Cref{fig:construction-2-proof-sketch}]
    	If we consider any set of five points that are either fully contained in $L^\top := \bigcup_{i} p^\top_i$ or fully contained in $L^\bot := \bigcup_{i} p^\bot_i$, then they must define a convex $5$-gon, due to the convexity (resp.~concavity) of $L^\top$ (resp. $L^\bot$). There are exactly $\binom{\lceil n/2 \rceil}{5} + \binom{\lfloor n/2 \rfloor}{5}$ sets of $5$ points that can be chosen in this way. It remains to argue that any $5$-gon $P$ that intersects both $L^\top$ and $L^\bot$ cannot be convex. By the pigeonhole principle, there is one curve from $L \in \{L^\bot, L^\top\}$ such that $|L \cap P| \geq 3$, and for this sketch we can assume that $L = L^\bot$ without loss of generality. There are now two cases, illustrated in~\Cref{fig:construction-2-proof-sketch}: either $|L^\bot \cap P| = 3$ or $|L^\bot \cap P| = 4$, and both of them lead to non-convex $5$-gons due to the concavity of $L^\bot$.
\end{proof}

\subsection{The Parabolic Construction}
\label{subsec:parabolic}

% Our second construction is inspired by an \emph{automated discovery}; using a MaxSAT encoding (that will be presented in~\Cref{sec:maxsat}) to obtain \emph{signotope orientations} between points, and a local search model (presented in~\Cref{sec:realizability}), 
A direct generalization of~\Cref{fig:realization-parabolic} consists on constructing, for any given $n$:
\[
	p^\top_i = \left(i, 2  + \frac{i^2}{n^2}\right), \forall i \in  \left[\left\lfloor\frac{n}{2}\right\rfloor \right] \quad \text{and} \quad
	p^\bot_i = \left(i, -2  - \frac{i^2}{n^2}\right), \forall i \in \left[\left\lceil\frac{n}{2}\right\rceil \right].\\ 
\]

An illustration of this construction is presented in~\Cref{fig:construction-2}. We remark that, albeit having a different use, the parabolic construction seems to be equivalent to the \emph{double chain} idea in~\cite{Garcia_Noy_Tejel_2000, Aichholzer_et_al_2015}, but whose application to this problem we discovered thanks to the obtained realizations.

\begin{proposition}
The number of convex $5$-gons obtained by applying the parabolic construction on $n$ points is exactly $\binom{\lfloor n/2 \rfloor}{5} + \binom{\lceil n/2 \rceil}{5}$.
\end{proposition}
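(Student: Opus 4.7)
The plan is to split the $\binom{n}{5}$ choices of five points into \emph{monochromatic} subsets (all five on $L^\top$ or all five on $L^\bot$) and \emph{mixed} subsets (containing points from both curves), and then show that every monochromatic subset forms a convex pentagon while no mixed subset does. The monochromatic count is exactly $\binom{\lfloor n/2 \rfloor}{5} + \binom{\lceil n/2 \rceil}{5}$, which yields the claim.

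For the monochromatic direction, the function $f^\top(x) = 2 + x^2/n^2$ is strictly convex on $(0,\infty)$ and $f^\bot(x) = -2 - x^2/n^2$ is strictly concave. A standard fact is that any finite collection of points lying on the graph of a strictly convex (resp.\ concave) function is in strictly convex position: for any three such points in $x$-order, the middle one lies strictly below (resp.\ above) the chord joining the outer two, so no point is in the convex hull of the others. Hence every monochromatic 5-subset forms a convex pentagon.

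For the mixed direction, I would use pigeonhole: at least one of $L^\top$, $L^\bot$ contains $k \in \{3,4\}$ of the five chosen points. Assume without loss of generality it is $L^\top$, and label the chosen top points $T_1, \ldots, T_k$ in increasing $x$-order. I claim that a middle top point $T_j$ (with $1 < j < k$) sits in the interior of the convex hull of the other four points, so the five points are not in convex position. Examining the vertical line $x = x_{T_j}$: by strict convexity of $f^\top$, the point $T_j$ lies strictly below the chord $T_1 T_k$, which bounds the hull from above; and because every point of $L^\bot$ has $y \le -2 < 2 \le y_{T_j}$, the lower boundary of the hull at $x_{T_j}$ lies well below $y_{T_j}$. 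Combined with $x_{T_j} \in (x_{T_1}, x_{T_k})$ being inside the hull's $x$-range, this places $T_j$ in the hull.

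The main obstacle is the last claim when the relevant edge of the lower boundary at $x_{T_j}$ is not between two bottom points but between $T_1$ (or $T_k$) and a bottom point $B$. I would handle this with a direct orientation (cross-product) check that $T_j$ lies above the line $T_k B$: substituting the explicit formulas for $y$, the inequality reduces to a bound of the form $x_B \cdot x_{T_k} \le 4n^2$, which holds with large margin since both factors are at most $\lceil n/2 \rceil$. With this quantitative lemma, the mixed case contributes no convex pentagons, and summing the monochromatic counts gives the announced formula.
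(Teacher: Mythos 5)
Your proposal is correct and follows essentially the same route as the paper's proof sketch: split the $5$-subsets into monochromatic ones (all convex by the strict convexity/concavity of the two parabolic arcs) and mixed ones, then use pigeonhole on the majority curve and show the middle point of that curve lies inside the hull of the other four. You go one step further than the paper's sketch by making explicit the quantitative fact that the arcs are flat enough (curvature term $i^2/n^2 \leq 1/4$ versus a vertical gap of $4$) for the top--bottom hull edges not to spoil the containment, which is precisely the detail the paper leaves implicit.
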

% \begin{proof}[Proof sketch, illustrated in~\Cref{fig:construction-2-proof-sketch}]
% 	If we consider any set of five points that are either fully contained in $L^\top := \bigcup_{i} p^\top_i$ or fully contained in $L^\bot := \bigcup_{i} p^\bot_i$, then they must form a convex $5$-gon, due to the convexity (resp.~concavity) of $L^\top$ (resp. $L^\bot$). There are exactly $\binom{\lceil n/2 \rceil}{5} + \binom{\lfloor n/2 \rfloor}{5}$ sets of $5$ points that can be chosen in this way. It remains to argue that any $5$-gon $P$ that intersects both $L^\top$ and $L^\bot$ cannot be convex. By the pigeonhole principle, there is one curve from $L \in \{L^\bot, L^\top\}$ such that $|L \cap P| \geq 3$, and for this sketch we can assume that $L = L^\bot$ without loss of generality. There are now two cases, illustrated in~\Cref{fig:construction-2-proof-sketch}: either $|L^\bot \cap P| = 3$ or $|L^\bot \cap P| = 4$, and both of them lead to non-convex $5$-gons due to the concavity of $L^\bot$.
% \end{proof}

% \input{illustration2_1}

As a direct consequence of the above constructions, we obtain the following upper bound for $\mu_5(n)$.
\begin{theorem}
	Let $\mu_5(n)$ denote the minimum number of convex pentagons when $n$ points are placed in the plane in general position. Then, $\mu_5(n)$ satisfies the inequality:
	\(
		\mu_5(n) \leq \binom{\left\lfloor n/2 \right\rfloor}{5} + \binom{\left\lceil n/2 \right\rceil}{5}.
	\)\label{thm:upper-bounds}
	\end{theorem}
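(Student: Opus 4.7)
The plan is essentially a one-line invocation: the theorem is stated as a direct consequence of the two constructions, and by definition $\mu_5(n)$ is the minimum of the pentagon count over all configurations, so any single configuration realizing the stated quantity proves the bound. Of the two constructions, the parabolic one is the natural choice because its preceding proposition is stated for \emph{every} $n$, whereas the pinwheel construction only handles $n = 4k$ (where it coincidentally gives the same value $2\binom{2k}{5} = \binom{n/2}{5}+\binom{n/2}{5}$).

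First, I would instantiate the parabolic construction from Section~\ref{subsec:parabolic}: place $\lceil n/2\rceil$ points on the upper curve $y = 2 + i^2/n^2$ and $\lfloor n/2\rfloor$ points on the lower curve $y = -2 - i^2/n^2$, with distinct integer $x$-coordinates. Second, I would verify this point set is in general position. No three points on one parabola are collinear because a line meets a parabola in at most two points; a collinear triple using points from both curves would correspond to a rational identity in $i,j,k$ that can be checked to fail, and in any case a generic infinitesimal perturbation of the $y$-coordinates preserves the combinatorial (signotope) type while destroying any accidental collinearity. Finally, I would invoke the parabolic proposition to conclude that the number of convex pentagons on this configuration is exactly $\binom{\lfloor n/2\rfloor}{5} + \binom{\lceil n/2\rceil}{5}$, and since $\mu_5(n)$ is the minimum over all configurations, the inequality follows.

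The main obstacle is therefore not in the theorem itself but already discharged by the preceding proposition; the only residual issue is the general-position check, which is routine and can always be repaired by a small perturbation. If one wished to present an alternative proof, one could equally invoke the pinwheel proposition for $n \equiv 0 \pmod 4$ and then use Section~\ref{sec:odd-even} (which handles the passage from odd to even $n$) together with an ad hoc argument for intermediate residues modulo $4$, but this is strictly more work and yields no stronger bound.
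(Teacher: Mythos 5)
Your proposal is correct and matches the paper's own treatment: the theorem is indeed obtained as an immediate corollary of the parabolic-construction proposition (the paper even remarks that the pinwheel construction is logically unnecessary for this bound), so exhibiting that single configuration and appealing to the definition of $\mu_5(n)$ as a minimum is exactly the intended argument. Your extra attention to the general-position check (and the perturbation fallback) is a reasonable bit of due diligence that the paper leaves implicit, but it does not change the route.
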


	\begin{conjecture}
	The bounds of~\Cref{thm:upper-bounds} are tight.\label{conjecture:main}
	\end{conjecture}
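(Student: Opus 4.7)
The plan is to prove the conjecture by combining the upper bound of~\Cref{thm:upper-bounds}, already established constructively via the pinwheel and parabolic constructions of~\Cref{sec:constructions}, with a matching lower bound $\mu_5(n) \geq \binom{\lfloor n/2 \rfloor}{5} + \binom{\lceil n/2 \rceil}{5}$. All the remaining work therefore lies in proving the lower bound.

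I would proceed by induction on $n$. By the odd-to-even reduction announced in~\Cref{sec:odd-even}, it suffices to prove the bound for odd $n = 2k+1$, since that case propagates to $n+1$. Base cases come directly from the MaxSAT verification of~\Cref{sec:maxsat} (covering $n \leq 16$), so the core inductive task is: given $\mu_5(2k) = 2\binom{k}{5}$, establish $\mu_5(2k+1) \geq \binom{k}{5} + \binom{k+1}{5}$ for any set $S$ of $n=2k+1$ points in general position.

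The first natural attempt is a deletion-averaging argument. For each $p_i \in S$, let $c_i$ be the number of convex pentagons in $S \setminus \{p_i\}$. Each convex pentagon in $S$ survives exactly $n-5$ of the $n$ single-point deletions, so $\sum_i c_i = (n-5)c$, whence $c \geq \tfrac{n}{n-5}\mu_5(n-1) = \tfrac{2k+1}{k-2}\binom{k}{5}$. A short calculation shows this falls short of the target $\tfrac{2k-3}{k-4}\binom{k}{5}$ by a gap of order $\tfrac{10}{(k-4)(k-2)}\binom{k}{5}$. Closing this gap is, in my view, the \emph{main obstacle}.

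To close it, I would pursue one of two refinements. The first is a weighted deletion scheme that distinguishes convex-hull vertices from interior points, exploiting the fact that removing an interior point typically destroys fewer pentagons than removing a hull vertex; quantifying this asymmetry via the signotope axioms of~(\ref{eq:axiom1})--(\ref{eq:axiom4}) may yield a strong enough inequality. The second, more ambitious route is a stability theorem: show that any configuration whose pentagon count is within $o(n^5)$ of the conjectured minimum must split combinatorially into two nearly balanced convex chains, recovering the parabolic construction up to combinatorial equivalence, and then conclude by direct analysis of such configurations. In parallel, pushing the MaxSAT verification of~\Cref{sec:maxsat} beyond $n = 16$ via symmetry-breaking provides a tractable intermediate goal toward enlarging the base of the induction and increasing empirical confidence in the conjecture.
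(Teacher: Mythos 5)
The statement you are asked to prove is a \emph{conjecture}: the paper does not prove it, explicitly leaves it open, and offers a \$500 reward for its resolution. So there is no ``paper proof'' to match; the only question is whether your proposal actually closes the problem, and it does not. Your reduction of the conjecture to the lower bound $\mu_5(n) \geq \binom{\lfloor n/2\rfloor}{5} + \binom{\lceil n/2\rceil}{5}$ is correct, and your use of the odd-to-even implication (\Cref{sec:odd-even}) to restrict attention to odd $n$ is exactly the right structural observation --- that implication is precisely \Cref{lemma:odd-induction} in the paper, proved via the same deletion-averaging inequality you derive (\Cref{cor:induction}). Your arithmetic is also right: averaging from $2k$ points to $2k+1$ points yields $\frac{2k+1}{k-2}\binom{k}{5}$ against a target of $\frac{2k-3}{k-4}\binom{k}{5}$, a shortfall of $\frac{10}{(k-4)(k-2)}\binom{k}{5} = \Theta(k^3)$. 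This is the entire content of the conjecture, and you correctly flag it as the main obstacle --- but you then leave it open.

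The two refinements you sketch are research programmes, not arguments. The weighted-deletion idea needs a quantitative lemma of the form ``removing a suitably chosen point destroys at least $X$ pentagons'' with $X$ large enough to absorb the $\Theta(k^3)$ deficit; no such lemma is stated, let alone proved, and it is not clear that hull-versus-interior is the right dichotomy (in the parabolic construction every point lies on one of two convex chains and the extremal configuration has many hull points). The stability-theorem route presupposes exactly the kind of structural understanding of near-optimal configurations that is missing; proving that near-minimizers must decompose into two balanced convex chains is plausibly \emph{harder} than the conjecture itself. Finally, enlarging the base cases via MaxSAT (the paper reaches $n \leq 16$) gives finitely many data points and cannot substitute for the unbounded inductive step. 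In short: your proposal correctly organizes the problem and identifies where the difficulty lives, but the decisive step --- the lower bound for odd $n$ --- remains unproven, so the conjecture is not established.
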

	
	We remark that albeit the pinwheel construction can be deemed unnecessary as~\Cref{thm:upper-bounds} is already implied by the parabolic construction, we consider it an interesting example of how the diversity of solutions to a MaxSAT problem can translate to a diversity of constructions, or proofs, of a mathematical fact.
	% As it will be shown in the rest of the paper, the
	%This conjecture is supported by different pieces of evidence; on the one hand, we have verified it explicitly until $n = 16$ through MaxSAT, and on the other hand, the constructions we will show now achieve the same bound in very different manners, while both match the shape of realizations we have been able to find for SLS satisfying assignments.  

\section{Odd-Even Implication}
\label{sec:odd-even}
Another piece of evidence for~Conjecture~\ref{conjecture:main} is given by the following pen-and-paper theorem: if the conjecture holds for $2n-1$ points, then it must hold for $2n$ points.

\begin{proposition}
	 If for some $n > 5$ it holds that $\mu_5(2n-1) = \binom{n}{5} + \binom{n-1}{5}$, then $\mu_5(2n) = 2\binom{n}{5}$.
	\label{lemma:odd-induction}
\end{proposition}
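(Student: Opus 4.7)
The plan is a standard double-counting/averaging argument exploiting the hypothesis as a uniform lower bound on every $(2n-1)$-point subconfiguration.

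First I would invoke Theorem~\ref{thm:upper-bounds} for the easy direction: the parabolic construction on $2n$ points realizes exactly $\binom{n}{5} + \binom{n}{5} = 2\binom{n}{5}$ convex pentagons, giving $\mu_5(2n) \leq 2\binom{n}{5}$. The bulk of the proof is the matching lower bound, which is where the hypothesis on $\mu_5(2n-1)$ comes in.

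For the lower bound, let $S$ be an arbitrary set of $2n$ points in general position, let $P$ be its number of convex pentagons, and for each $p \in S$ let $f(p)$ denote the number of convex pentagons with $p$ as a vertex. Deleting $p$ yields a $(2n-1)$-point configuration in general position, whose number of convex pentagons is $P - f(p)$; by the hypothesis this is at least $\binom{n}{5} + \binom{n-1}{5}$. Hence
\[
f(p) \;\leq\; P - \binom{n}{5} - \binom{n-1}{5}, \qquad \text{for every } p \in S.
\]
Summing over $p \in S$ and using that each pentagon is counted exactly five times on the left,
\[
5P \;=\; \sum_{p \in S} f(p) \;\leq\; 2n\left(P - \binom{n}{5} - \binom{n-1}{5}\right).
\]
Rearranging gives $(2n - 5)\,P \geq 2n\bigl(\binom{n}{5} + \binom{n-1}{5}\bigr)$.

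The final step is a short algebraic simplification: using $\binom{n-1}{5} = \binom{n}{5}\cdot\frac{n-5}{n}$, one gets $\binom{n}{5} + \binom{n-1}{5} = \binom{n}{5}\cdot \frac{2n-5}{n}$, so
\[
P \;\geq\; \frac{2n}{2n-5} \cdot \binom{n}{5} \cdot \frac{2n-5}{n} \;=\; 2\binom{n}{5},
\]
which combined with the upper bound yields $\mu_5(2n) = 2\binom{n}{5}$. I do not anticipate a genuine obstacle: the only subtlety is noticing that the ratio $(2n-5)/n$ produced by the identity cancels precisely against the factor $2n/(2n-5)$ coming from the averaging, and that this cancellation is exactly what forces the odd-even implication to be tight (any weaker hypothesis than $\binom{n}{5}+\binom{n-1}{5}$ would not close the gap).
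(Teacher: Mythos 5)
Your proof is correct and follows essentially the same route as the paper: the upper bound comes from Theorem~\ref{thm:upper-bounds}, and your delete-a-point averaging argument is exactly the paper's Corollary~\ref{cor:induction} (the $m=n-1$ case of Lemma~\ref{lemma:folklore}) with its proof inlined, followed by the same algebraic cancellation. The only cosmetic difference is that you streamline the final computation via $\binom{n-1}{5}=\binom{n}{5}\tfrac{n-5}{n}$ rather than expanding factorials.
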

	
In order to prove this, we will generalize a folklore idea~\cite{Brodsky_Durocher_Gethner_2001,Scheinerman_Wilf_1994}.

\begin{lemma}
	Let $m$ and $r$ be values such that $\mu_k(m) \geq r$. Then for every $n \geq m$ we have $\mu_k(n) \geq r \cdot \binom{n}{m}/\binom{n-k}{m-k}  = r \cdot \binom{n}{k}/\binom{m}{k}$.
	\label{lemma:folklore}
\end{lemma}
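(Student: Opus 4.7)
The plan is to prove this by a standard double-counting argument applied to pairs (convex $k$-gon, $m$-subset containing it). First I would fix an arbitrary configuration $S$ of $n$ points in general position, and let $c$ denote the number of convex $k$-gons with vertices in $S$; the goal is to show $c \geq r \binom{n}{m}/\binom{n-k}{m-k}$, which would immediately yield $\mu_k(n) \geq r \binom{n}{m}/\binom{n-k}{m-k}$ since $S$ was arbitrary.

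The central object to count in two ways is
\[
  P \;=\; \bigl\{(T,K) \,:\, T \subseteq S,\; |T|=m,\; K \subseteq T,\; K \text{ forms a convex } k\text{-gon}\bigr\}.
\]
On one hand, every $m$-subset $T \subseteq S$ inherits general position from $S$ (no three points of $T$ are collinear since none are in $S$), so by the hypothesis $\mu_k(m) \geq r$ the subconfiguration $T$ contains at least $r$ convex $k$-gons, giving $|P| \geq r \binom{n}{m}$. On the other hand, each of the $c$ convex $k$-gons in $S$ is extended to an $m$-subset of $S$ in exactly $\binom{n-k}{m-k}$ ways (freely choose the remaining $m-k$ points), so $|P| = c \cdot \binom{n-k}{m-k}$.

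Combining the two counts and dividing yields $c \geq r \binom{n}{m}/\binom{n-k}{m-k}$, as required. The final stated equality $\binom{n}{m}/\binom{n-k}{m-k} = \binom{n}{k}/\binom{m}{k}$ is a routine binomial identity: both sides of $\binom{n}{m}\binom{m}{k} = \binom{n}{k}\binom{n-k}{m-k}$ count triples $(K,T)$ with $K \subseteq T \subseteq [n]$, $|K|=k$, $|T|=m$ — the left by first choosing $T$ and then $K$ inside it, the right by first choosing $K$ and then extending to $T$.

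There is essentially no obstacle here; the only subtlety worth flagging is the observation that general position is preserved under taking subsets, which is what licenses applying $\mu_k(m) \geq r$ to each $T$. The argument is completely independent of $k$ and of the specific geometric content of ``convex $k$-gon,'' so the same scheme would prove an analogous statement for any monotone property of point subsets.
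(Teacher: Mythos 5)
Your proof is correct and follows essentially the same double-counting argument as the paper: each of the $\binom{n}{m}$ subsets of size $m$ contributes at least $r$ convex $k$-gons, and each convex $k$-gon is counted exactly $\binom{n-k}{m-k}$ times. The only cosmetic difference is that you verify the final identity $\binom{n}{m}/\binom{n-k}{m-k}=\binom{n}{k}/\binom{m}{k}$ combinatorially, whereas the paper does so by a direct factorial computation.
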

\begin{proof}[Proof sketch]
For each of the $\binom{n}{m}$ subsets of $m$ points, we know there will be at least $r$ convex $k$-gons. However, this will count multiple times a fixed convex $k$-gon that appears in many $m$-point subsets. In particular, each convex $k$-gon will be counted $\binom{n-k}{m-k}$ times this way, thus yielding the first inequality.  
The equality comes simply from:
\begin{align*}
\binom{n}{m}/ \binom{n-k}{m-k} &= \frac{n!}{m! \, (n-m)!} \cdot \frac{(m-k)! \, (n-m)!}{(n-k)!} \\
&= \frac{n!}{(n-k)! \, k!} \cdot \frac{(m-k)! \, k!}{m!} = \binom{n}{k}  / \binom{m}{k}. 
\end{align*}
\end{proof}
\begin{corollary}
	The limits $c_k := \lim_{n \to \infty} \mu_k(n) /{ \binom{n}{k}}$ are well defined.
\end{corollary}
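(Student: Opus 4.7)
The plan is to observe that \Cref{lemma:folklore} immediately implies that the sequence $a_n := \mu_k(n)/\binom{n}{k}$ is monotonically non-decreasing in $n$ (once $n \geq k$), and then combine this with a trivial upper bound of $1$ to invoke the Monotone Convergence Theorem.

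First I would fix $k$ and consider any two indices $m, n$ with $k \leq m \leq n$. Applying \Cref{lemma:folklore} with $r := \mu_k(m)$ gives
\[
\mu_k(n) \;\geq\; \mu_k(m) \cdot \frac{\binom{n}{k}}{\binom{m}{k}},
\]
which, after dividing both sides by $\binom{n}{k}$, rearranges exactly to $a_n \geq a_m$. Thus $(a_n)_{n \geq k}$ is non-decreasing.

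Next I would note that $\mu_k(n) \leq \binom{n}{k}$ trivially, since $\mu_k(n)$ counts a subfamily of the $\binom{n}{k}$ size-$k$ subsets of the point set. Hence $0 \leq a_n \leq 1$ for every $n \geq k$. A bounded monotone sequence of reals converges, so $c_k := \lim_{n \to \infty} a_n$ exists and lies in $[0,1]$.

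There is no real obstacle here: the substantive content is already packaged into \Cref{lemma:folklore}, and the corollary amounts to recognizing that the very same inequality, read as a statement about $a_n$ rather than $\mu_k(n)$, is precisely the monotonicity needed for convergence. The only minor thing to be careful about is to start the sequence at $n = k$ (or whatever smallest value makes $a_n$ well-defined and nonzero in the generic case), which ensures that the inequality $a_n \geq a_m$ is non-vacuous.
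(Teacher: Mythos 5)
Your proposal is correct and follows essentially the same route as the paper: apply \Cref{lemma:folklore} (the paper takes $m=n-1$, you take general $m\leq n$) to show $\mu_k(n)/\binom{n}{k}$ is non-decreasing, bound it above by $1$, and conclude by monotone convergence.
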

\begin{proof}
	Use~\Cref{lemma:folklore} with $m = n-1$ and $r = \mu_{k}(n-1)$, to get 
	\[
		\mu_k(n) / \binom{n}{k}\geq \mu_k(n-1) / \binom{n-1}{k}, 
	\]
	which implies the sequence $\mu_k(n) / \binom{n}{k}$ is non-decreasing, and as it is clearly bounded above by $1$, we conclude.
\end{proof}

Moreover, using~\Cref{lemma:folklore} with $k=5$, $m = n-1$, and $r = \mu_{k}(n-1)$ yields:
\begin{corollary}
	For any $n > 5$, $\mu_5(n) \geq \frac{n}{n-5} \cdot \mu_5(n-1)$.
	\label{cor:induction}
\end{corollary}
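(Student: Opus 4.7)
The plan is to obtain this corollary as a one-line specialization of~\Cref{lemma:folklore}, which already carries all the combinatorial content. The only thing to verify is that the ratio $\binom{n}{5}/\binom{n-1}{5}$ simplifies to $n/(n-5)$.

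Concretely, I would instantiate~\Cref{lemma:folklore} with $k=5$, $m = n-1$, and $r = \mu_5(n-1)$. The hypothesis $\mu_k(m) \geq r$ is then $\mu_5(n-1) \geq \mu_5(n-1)$, which is trivially true, and the hypothesis $n \geq m$ becomes $n \geq n-1$, also trivially true (and the assumption $n > 5$ ensures $m = n-1 \geq 5$ so the binomial coefficients are meaningful). The second form of the conclusion in~\Cref{lemma:folklore} then gives
\[
\mu_5(n) \;\geq\; \mu_5(n-1) \cdot \frac{\binom{n}{5}}{\binom{n-1}{5}}.
\]
To finish, I would simplify the ratio: $\binom{n}{5}/\binom{n-1}{5} = \frac{n!\,(n-6)!}{(n-5)!\,(n-1)!} = \frac{n}{n-5}$, which yields the desired inequality $\mu_5(n) \geq \frac{n}{n-5}\,\mu_5(n-1)$.

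There is no genuine obstacle here; the statement is a clean specialization whose only subtlety is making sure the right form of the inequality in~\Cref{lemma:folklore} is used (the one involving $\binom{n}{k}/\binom{m}{k}$ rather than $\binom{n}{m}/\binom{n-k}{m-k}$) so that the arithmetic collapses immediately to $n/(n-5)$ without reopening the double-counting argument from the lemma's proof.
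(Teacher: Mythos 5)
Your proposal is correct and matches the paper's argument exactly: the paper also obtains this corollary by instantiating \Cref{lemma:folklore} with $k=5$, $m=n-1$, $r=\mu_5(n-1)$, and the simplification $\binom{n}{5}/\binom{n-1}{5} = n/(n-5)$ is right.
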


We are now ready to prove~\Cref{lemma:odd-induction}.
\begin{proof}[Proof of~\Cref{lemma:odd-induction}]
	By using~\Cref{thm:upper-bounds} we have \(
		\mu_5(2n) \leq 2\binom{n}{5}.	
	\)
	Now, to argue that equality is achieved, we use~\Cref{cor:induction} to obtain that
	\begin{align*}
		\mu_5(2n) &\geq \frac{2n}{2n-5} \cdot \mu_5(2n-1) = \frac{2n}{2n-5} \left(\binom{n}{5} + \binom{n-1}{5}\right)\\
			% &= \binom{n}{5} +\frac{5}{2n-5} \binom{n}{5} + \frac{2n}{2n-5} \cdot \frac{(n-1)!}{(n-6)! \, 5!}\\
				 &= \binom{n}{5} + \frac{5}{2n-5} \binom{n}{5} + \frac{2 \cdot n! \cdot (n-5)}{(2n-5)(n-5)!\,5!}\\
				 &= \binom{n}{5} + \frac{5}{2n-5} \binom{n}{5} + \frac{2(n-5)}{2n-5} \binom{n}{5} = 2\binom{n}{5}.	
	\end{align*}
	The two inequalities imply that $\mu_5(2n) = 2\binom{n}{5}$ and conclude the proof.
\end{proof}

\section{MaxSAT Verification}\label{sec:maxsat}

The Stochastic Local Search presented in~\Cref{sec:SLS} gives us upper bounds on the value of $\mu_5(n)$ for specific values of $n$, 
on the condition that the associated signotope assignments are realizable. Thanks to~\Cref{thm:upper-bounds}, obtained through the constructions of~\Cref{sec:constructions}, we know the upper bounds found through SLS are indeed true upper bounds for $\mu_5(n)$. However, it could be the case that better bounds could be found since Stochastic Local Search does not provide any guarantees of optimality. To further support our conjecture, we will use Maximum Satisfiability (MaxSAT)~\cite{handbook-maxsat} solvers to find the \emph{optimal value} of $\mu_5(n)$ for some values of $n$. In particular, we show that $\mu_5(9) = 1$, $\mu_5(11) = 7$, $\mu_5(13) = 27$, and $\mu_5(15) = 77$. %, which further supports our conjecture.
Concretely, as every set of points in the plane is captured by the signotope abstraction, if no assignment of signotopes induces fewer than $m$ convex $5$-gons for a given value of $n$, then indeed we conclude $\mu_5(n) \geq m$. 
 %To support this conjecture, we view the SAT encoding presented in~\Cref{sec:SLS} as a Maximum Satisfiability (MaxSAT) problem~\cite{}, where the goal is to minimize the number of 5-gons. 

\subsection{MaxSAT Encoding}

MaxSAT is an optimization variant of SAT, where, given an unsatisfiable formula, the goal is to maximize the number of satisfied clauses.
MaxSAT can be extended to include two sets of clauses: \emph{hard} and \emph{soft}. An optimal assignment for a MaxSAT problem satisfies all hard clauses while maximizing the number of satisfied soft clauses.
To build a MaxSAT encoding for this problem we first introduce \emph{relaxation} variables $\relaxp{a}{b}{c}{d}{e}$ to denote whether the 5-gon with vertices
$a$, $b$, $c$, $d$, and $e$ is convex and thus must be avoided. We then modify clauses (\ref{eq:hard1})-(\ref{eq:hard8}) by adding the literal $\relaxp{a}{b}{c}{d}{e}$ to each of them.
% \begin{eqnarray}
% &&\relaxp{a}{b}{c}{d}{e} \lor \orientp{a}{b}{c} \lor \orientp{b}{c}{d} \lor \orientp{c}{d}{e} \label{eq:rhard1}\\ 
% &&\relaxp{a}{b}{c}{d}{e} \lor \orientn{a}{b}{c} \lor \orientn{b}{c}{d} \lor \orientn{c}{d}{e} \label{eq:rhard2}\\ 
% &&\relaxp{a}{b}{c}{d}{e} \lor \orientp{a}{b}{c} \lor \orientp{b}{c}{e} \lor \orientn{a}{d}{e} \label{eq:rhard3}\\ 
% &&\relaxp{a}{b}{c}{d}{e} \lor \orientn{a}{b}{c} \lor \orientn{b}{c}{e} \lor \orientp{a}{d}{e} \label{eq:rhard4}\\ 
% &&\relaxp{a}{b}{c}{d}{e} \lor \orientp{a}{b}{d} \lor \orientp{b}{d}{e} \lor \orientn{a}{c}{e} \label{eq:rhard5}\\ 
% &&\relaxp{a}{b}{c}{d}{e} \lor \orientn{a}{b}{d} \lor \orientn{b}{d}{e} \lor \orientp{a}{c}{e} \label{eq:rhard6}\\ 
% &&\relaxp{a}{b}{c}{d}{e} \lor \orientp{a}{b}{e} \lor \orientn{a}{c}{d} \lor \orientn{c}{d}{e} \label{eq:rhard7}\\ 
% &&\relaxp{a}{b}{c}{d}{e} \lor \orientn{a}{b}{e} \lor \orientp{a}{c}{d} \lor \orientp{c}{d}{e} \label{eq:rhard8}.
% \end{eqnarray}

The axiom clauses (\ref{eq:axiom1})-(\ref{eq:axiom4}) in~\Cref{sec:background} and the modified clauses (\ref{eq:hard1})-(\ref{eq:hard8}) are defined as being hard. 
Finally we introduce the following soft clauses:
\begin{eqnarray}
\bigwedge_{(a,b,c,d,e) \in S} \big(\relaxn{a}{b}{c}{d}{e}\big).
\label{eq:soft}
\end{eqnarray}
\begin{table}[t]
        \centering
        \caption{Number of variables (\#Vars), hard clauses (\#Hard), soft clauses (\#Soft), and symmetry breaking clauses (\#Symmetry).\!\!\!\!}\label{tbl:formulas}
        \begin{tabular}{lC{1.6cm}C{1.6cm}C{1.6cm}C{1.6cm}}
        \toprule
        %& $\mu_5(9)$ & $\mu_5(11)$ & $\mu_5(13)$ & $\mu_5(15)$\\
        Instance & \#Vars & \#Hard & \#Softs & \#Symmetry\\
        \midrule
        $\mu_5(9)$ & 210	& 2\,016 	& 126  & 28\\
        $\mu_5(11)$ & 627 & 6\,336 & 462 & 45\\
        $\mu_5(13)$ & 1\,573 & 16\,016  & 1\,287 & 66\\
        $\mu_5(15)$ & 3\,458 	& 34\,944	& 3\,003	& 91\\
        \bottomrule
        \end{tabular}
        
    % \end{minipage}
    %\hspace{1cm} % Adjust the space to your liking
\end{table}

The MaxSAT formulas for $\mu_5(n)$ are modest in size for small $n$, with $\mu_5(15)$ featuring about 3\,500 variables, 35\,000 hard clauses, and 3\,000 soft clauses. 
\Cref{tbl:formulas} shows the size of the MaxSAT formula for $\mu_5(n)$ (including symmetry-breaking clauses described next in~\Cref{subsec:symmetry}). 
% Even though the size of the formula is small, these formulas are challenging for MaxSAT solvers. Also, even though the number of symmetry breaking clauses that are added to the formula as hard clauses is small, it has a large impact on the performance of MaxSAT solvers. % (see Section~\ref{subsec:appendix-symmetry} for more details).

%Since MaxSAT solvers provide \emph{optimality guarantees}, we can use them to prove the optimal value for $\mu_5(n)$ for specific values of $n$, in particular, we show that $\mu_5(9) = 1$, $\mu_5(11) = 7$, $\mu_5(13) = 27$, and $\mu_5(15) = 77$, which supports our conjecture.

\paragraph{\bf Symmetry Breaking.} 
\label{subsec:symmetry}
\begin{table}[t]
\small
\centering
\caption{Experimental results \textbf{without symmetry-breaking constraints}: Wall clock time in seconds to solve $\mu_5(n)$ with a time limit of 18,000 seconds (5 hours) per instance. A `$-$' denotes a timeout was reached, and optimality was not proven. For the cube-and-conquer approach (C\&C), we also include in parenthesis the sum of the CPU time needed to solve all disjoint formulas.}
\label{tab:results-symmetry}
\vspace{0.4em}
\begin{tabular}{>{\centering\arraybackslash}p{0.5cm}|l|r|r|r|r}
\cline{2-6}
&Solver & $\mu_5(9)$ & $\mu_5(11)$ & $\mu_5(13)$ & $\mu_5(15)$ \\
\hline
\multirow{4}{*}{\rotatebox[origin=c]{90}{Sequential}} & \evalmaxsat & 5.01 & 485.08 & $-$ & $-$ \\
&\maxcdcl & \textbf{0.02} & 35.28 & $-$ & $-$ \\ 
&\pacose & 0.02 & 99.69 & $-$ & $-$ \\
&\maxhs & 0.03 & $-$ & $-$ & $-$ \\
\hline
\multirow{4}{*}{\rotatebox[origin=c]{90}{C\&C}}&\evalmaxsat & 4.98 (39.40) & 287.27 (3,728.45)& $-$ & $-$\\
%&\maxcdcl & 0.03 (0.46) & 0.15 (1.46) & \textbf{7.69} (140.88) & \textbf{3,787.88} (61,504.46) \\ 
&\maxcdcl & 0.02 (0.14) & \textbf{2.27} (19.42) & \textbf{1,004.19} (18,133.54) & $-$ \\ 
&\pacose & 0.02 (0.14) & 5.06 (61.11) & $-$ & $-$\\
&\maxhs & 0.02 (0.17) & $-$ & $-$ & $-$ \\
\hline
\end{tabular}
\end{table}

Adding symmetry-breaking predicates that remove equivalent solutions can prune the search space and improve the performance of SAT solvers~\cite{symmetry-breaking1,symmetry-breaking2}  and is beneficial as well for MaxSAT solvers. For $\mu_5(n)$, we can break some symmetries by adding hard unit clauses $\orientp{1}{b}{c}$ with $b < c$ so that only solutions where points appear in counterclockwise order with respect to $p_1$ are obtained. %that remove equivalent solutions %. This 
% symmetry-breaking predicate makes $p_1$ a point on the boundary of the convex hull and sorts the points $p_2$ to $p_n$ around $p_1$, which also makes $p_2$ and $p_n$ two points on the convex hull. 
A proof of correctness for this symmetry breaking is presented in~\cite[Lemma 1]{Scheucher_2020}.
By comparing~\Cref{tbl:formulas,tab:results,tab:results-symmetry}, we observe that adding just a few unit clauses to break symmetries has a significant impact on MaxSAT solver performance. For instance, $\mu_5(15)$ is unsolvable with any approach, compared to a 32-minute solve time with \maxcdcl (see Table~\ref{tab:results} in Section~\ref{sec:maxsat}). For $\mu_5(13)$, \maxcdcl with cube-and-conquer and symmetry-breaking predicates solves it in 7.69 seconds, compared to over 1\,000 seconds without symmetry breaking. This effect is also seen with other solvers, highlighting the importance of symmetry breaking in practical problem-solving.

\paragraph{\bf MaxSAT Approaches.}

MaxSAT solvers employ various strategies for finding optimal solutions. This paper explores four MaxSAT solvers (\evalmaxsat, \maxcdcl, \pacose, and \maxhs) that have excelled in the annual MaxSAT Evaluations.\footnote{\url{https://maxsat-evaluations.github.io/}} 
\evalmaxsat~\cite{evalmaxsat} uses an unsatisfiability-based algorithm, beginning with a linear search from the lower bound to the optimal solution. The winning version in the MaxSAT Evaluation 2023 used an integer linear programming (ILP) solver as a preprocessing step but performed better without it in our evaluation.
\maxcdcl~\cite{maxcdcl} combines clause learning with branch-and-bound and was among the top-performing solvers in the MaxSAT Evaluation 2023. We used the default version without ILP preprocessing.
\pacose~\cite{pacose} performs a linear search, iteratively improving the upper bound until it finds an optimal solution. In our evaluation, we utilized the version from the MaxSAT Evaluation 2021.
\maxhs~\cite{maxhs} employs an implicit hitting set approach, combining SAT and ILP solvers. It was the leading solver in the MaxSAT Evaluation 2021 and continues to excel in solving MaxSAT problems.

\paragraph{\bf Cube-and-conquer.} We explore a strategy inspired by the cube-and-conquer approach to parallelizing SAT formulas~\cite{cube-and-conquer}. The key idea behind cube-and-conquer is to split a formula into $2^n$ disjoint formulas by carefully choosing $n$ variables and fixing their truth values to all possible $2^n$ combinations. For instance, consider Boolean variables $x_1$ and $x_2$ that belong to a formula $\varphi$. This formula can be split into 4 disjoint formulas with the following construction $\varphi_1 = \varphi \cup (x_1 \wedge x_2)$, $\varphi_2 = \varphi \cup (\neg x_1 \wedge x_2)$, $\varphi_3 = \varphi \cup (x_1 \wedge \neg x_2)$, and $\varphi_4 = \varphi \cup (\neg x_1 \wedge \neg x_2)$. The intuition behind this idea is that it is easier to solve $\varphi_i$ than $\varphi$ and that this approach can be used to create many disjoint formulas and enable massive parallelism. In our evaluation, we selected the variables $\orientp{3}{4}{5}$, $\orientp{5}{6}{7}$, $\dots$, $\orientp{n-2}{n-1}{n}$
(3 for $\mu_5(9)$, 4 for $\mu_5(11)$, 5 for $\mu_5(13)$, and 6 for $\mu_5(15)$), because we experimentally observed that these variables split the search into balanced subspaces. % and try this approach to solve MaxSAT formulas. 
Note that an optimal solution for $\varphi$ corresponds to the best optimal solution for \emph{all} $\varphi_i$. Even though this is just a preliminary study on using the cube-and-conquer approach to solve MaxSAT formulas, we show that even a few variables can have a significant impact on allowing us to solve harder problems.

\paragraph{\bf Finding Optimal Values for \texorpdfstring{$\mu_5(n)$}{mu\_5(n)}.}

%We run \evalmaxsat, \maxcdcl, \pacose, and \maxhs using sequential and parallel versions on an Intel(R) Xeon(R) Silver 4210R CPU @ 2.40GHz with 40 cores and 64 GB of memory. All experiments were run with a time limit of 5 hours (wall-clock time) per benchmark.

We run \evalmaxsat, \maxcdcl, \pacose, and \maxhs with the sequential and cube-and-conquer versions on the StarExec cluster~\cite{starexec} $-$Intel(R) Xeon(R) CPU E5-2609 @ 2.40GHz$-$with a memory limit of 32 GB. All experiments were run with a time limit of 5 hours (wall-clock time) per benchmark (which is the largest time limit allowed by StarExec).
Symmetry-breaking predicates were applied to all formulas, as they are crucial for effective problem-solving (cf.~\Cref{tab:results-symmetry,tab:results}).
\Cref{tab:results} shows that MaxSAT can be used to prove the optimality of small values of $n$ for $\mu_5(n)$. Note that the best exact values for $\mu_5(n)$ prior to this work were up to $\mu_5(11)$. By using MaxSAT we can improve the best-known bounds for $\mu_5(n)$ up to $n = 16$.\footnote{While we have exact values only up to $n=15$, the odd-even implication (see~\Cref{sec:odd-even}) guarantees that the conjecture must also hold for $n=16$.}
%
%shows that if the conjecture holds for $2n-1$ points, it must also hold for $2n$ points. Thus, as it holds for $n=15$, it also holds for $n=16$.}
Determining $\mu_5(n)$ is a challenging problem for current MaxSAT tools, and while all of the evaluated tools could solve $\mu_5(11)$, only \maxcdcl was able to solve $\mu_5(13)$ using the sequential version and $\mu_5(15)$ with the cube-and-conquer approach.
%The best MaxSAT solver to solve this kind of problem is \maxcdcl, which is the only MaxSAT solver that can solve $\mu_5(13)$ within 5 hours. 
%We also tried larger timeouts and \maxcdcl would be able to solve $\mu_5(15)$ in around 38 hours. Other MaxSAT solvers could not find a solution to more instances even when we allocated more time.
To the best of our knowledge, this is the first example of how cube-and-conquer can improve the performance of MaxSAT solvers. For instance, \maxhs can solve $\mu_5(11)$ in 28.45, seconds while it would take 15 times more wall clock time to solve it using the sequential approach. Even when considering the sum of CPU taken by all disjoint formulas, it is still beneficial to use cube-and-conquer for most cases. The cube-and-conquer approach can leverage having multiple machines available on a cluster, such as StarExec, to improve the scalability of MaxSAT tools, and allows \evalmaxsat and \pacose to solve $\mu_5(13)$ within the allocated time budget. 
Furthermore, it improved the performance of \maxcdcl to solve $\mu_5(15)$ in approximately 32 minutes. These results encourage further exploration of a cube-and-conquer approach to solve other hard combinatorial problems with MaxSAT, and open new research directions on how to automatically select splitting variables for creating disjoint subformulas in the context of MaxSAT.
\begin{table}[t]
    \small
\centering
\caption{Experimental results \textbf{with symmetry-breaking constraints}: Wall clock time in seconds to solve $\mu_5(n)$ with a time limit of 18\,000 seconds (5 hours) per instance. A `$-$' denotes a timeout was reached, and optimality was not proven. For the cube-and-conquer approach (C\&C), we also include in parenthesis the sum of the CPU time needed to solve all disjoint formulas.}
\label{tab:results}
\vspace{0.4em}
\begin{tabular}{>{\centering\arraybackslash}p{0.5cm}|l|r|r|r|r}
\cline{2-6}
&Solver & $\mu_5(9)$ & $\mu_5(11)$ & $\mu_5(13)$ & $\mu_5(15)$ \\
\hline
\multirow{4}{*}{\rotatebox[origin=c]{90}{Sequential}} & \evalmaxsat & 3.39 & 237.59 & $-$ & $-$ \\
&\maxcdcl & \textbf{0.02} & 0.49 & 150.59 & $-$ \\ 
&\pacose & 0.03 & 1.93 & $-$ & $-$ \\
&\maxhs & 0.03 & 426.97 & $-$ & $-$ \\
\hline
\multirow{4}{*}{\rotatebox[origin=c]{90}{C\&C}}&\evalmaxsat & 0.92 (5.05) & 96.65 (873.05)& 825.11 (22\,015.37) & $-$\\
%&\maxcdcl & 0.03 (0.46) & 0.15 (1.46) & \textbf{7.69} (140.88) & \textbf{3,787.88} (61,504.46) \\ 
&\maxcdcl & 0.03 (0.46) & \textbf{0.15} (1.46) & \textbf{7.69} (140.88) & \textbf{1\,930.40} (66\,333.04) \\ 
&\pacose & 0.03 (0.46) & 0.19 (1.73) & 136.70 (2\,647.45) & $-$\\
&\maxhs & 0.03 (0.44) & 28.45 (93.32) & $-$ & $-$ \\
\hline
\end{tabular}
\end{table}
% \begin{table}
% \centering
% \begin{tabular}{l|r|r|r|r}
% \hline
% Solver & $\mu_5(9)$ & $\mu_5(11)$ & $\mu_5(13)$ & $\mu_5(15)$\\
% \hline
% EvalMaxSAT & 2.2 & 253.8 & $-$ & $-$\\
% %MaxCDCL & 0.01 & 42.6 & 107.2 & 133,929.0 \\ 
% MaxCDCL & 0.01 & 42.6 & 107.2 & $-$ \\ 
% Pacose & 0.01 &1.4 & $-$ & $-$\\
% MaxHS & 0.01 & 482.1 & $-$ & $-$\\
% \hline
% EvalMaxSAT &  &  &  & \\
% MaxCDCL &  &  & 6.4 & 2697.0 \\ 
% Pacose &  & &  & \\
% MaxHS &  & &  & \\
% \hline
% %MaxCDCL\newline cube-and-conquer & & &127.1~cpu \newline (6.4 wl) & 55,306.3~cpu \newline (2697.0 wl)\\
% %\hline
% %QMaxSATpb\newline (certified) & 0.03 \newline (2.9 verify) & 1.6 \newline ($>$60 min verify) & timeout & timeout\\
% %\hline
% \end{tabular}
% \label{tab:results}
% \caption{Wall clock time in seconds to solve $\mu_5(n)$ for $n=9,11,13,15$ with a time limit of 5 hours per instance. A `$-$` denotes a timeout was reached and optimality was not proven.}
% \end{table}

%Note times were run with multiple processes running which may have a detrimental effect on single threaded solvers. Will rerun those in isolation before Sunday but we should expect ~25\% deviation at most.%
\vspace{-0.08em}
\paragraph{\bf Certification of Results.}

%\vspace{-0.08em}
%Contrary to the SAT competition, where SAT solvers produce a proof of unsatisfiability that can be independently checked for the correctness of the result, MaxSAT solvers who participate in the MaxSAT Evaluation do not produce any proof of optimality. Therefore, it could be the case that MaxSAT solvers produce incorrect results.
%
%Recently, there have been some MaxSAT approaches~\cite{certified-maxsat,qmaxsatpb,veritas} that can produce proofs of optimality that can be checked by the verifier \veripb~\cite{veripb}. We explore some of these approaches to increase trust in our results.

Unlike SAT competitions, where SAT solvers provide proofs of unsatisfiability that can be independently verified, MaxSAT solvers in the MaxSAT Evaluation do not offer proofs of optimality. Consequently, there is a possibility of incorrect results.
Recently, certain MaxSAT techniques~\cite{certified-maxsat,qmaxsatpb,veritas} have emerged, capable of generating verifiable proofs of optimality using the verifier \veripb~\cite{veripb}.  % We investigate these approaches to enhance result reliability.
%
%We used the \qmaxsatpb~\cite{qmaxsatpb} solver that uses a search strategy similar to \pacose. Similar to \pacose, \qmaxsatpb can only solve $\mu_5(9)$ and $\mu_5(11)$ with a time limit of 5 hours but could only verify the result for $\mu_5(9)$. 
We used the $\veritas$~\cite{veritas} framework to generate a certified CNF formula that encodes the $\mu_5(n$) SAT problem with an additional constraint that enforces the bound to be smaller than our conjectured best value. This is similar to what \pacose does in the last step of its search algorithm when it proves that the last solution found is optimal. We can feed the resulting formula to a SAT solver and verify the unsatisfiability proof with \veripb. This approach can solve $\mu_5(9)$ and $\mu_5(11)$ and verify both results within a few seconds. Unfortunately, larger values of $n$ are beyond the reach of this approach since this construction (like \pacose) cannot solve $\mu_5(13)$ within the 5-hour time limit.
% Nonetheless, our confidence in the results remains strong, as multiple solvers concur on the optimal value for $\mu_5(13)$. In the future, we can further bolster this confidence by leveraging the ongoing improvements in MaxSAT certification.
 %and can enhance trust by benefiting from ongoing MaxSAT certification improvements.

%  \vspace{-0.5em}
\section{Concluding Remarks and Future Work}\label{sec:conclusion}
We have proved the upper bound of~\Cref{thm:upper-bounds} in two different ways by exhibiting two different constructions, and verified through MaxSAT that this bound is tight at least up to $n = 16$.
Moreover, we have proven that the conjecture cannot fail for the first time at an even number of points.
Following the tradition of Erd\H{o}s, we offer a reward of \$500 for the first person to prove or disprove~Conjecture~\ref{conjecture:main}.
%\vspace{-1em}
\paragraph{\bf On the Constant \texorpdfstring{$c_5$}{c\_5}.}
Let us now discuss bounds on the constant $c_5$ that our work implies. First, we note that~\Cref{thm:upper-bounds} provides an upper bound to $c_5$ as follows.
We will use the equation \(
	\lim_{n \to \infty} \binom{n}{k}/{n^k} = \frac{1}{k!}	
\), which holds for any fixed integer $k > 0$.
Now, if we consider the subsequence of even numbers $2n$, we have
\[
	c_5 = \lim_{2n \to \infty} \frac{\mu_5(2n)}{\binom{2n}{5}} \leq \lim_{2n \to \infty} \frac{2{\binom{n}{5}}}{{\binom{2n}{5}}} = \lim_{2n \to \infty} \frac{2 \cdot {n^5} \cdot {5!}}{{(2n)}^5 \cdot {5!}} = \frac{1}{16} = 0.0625.
\]

After we had written this article, it has come to our attention that this upper bound appears in the work of Goaoc et al.~\cite{goaoc_et_al:LIPIcs.SOCG.2015.300}, however, they do not provide proof.
Nonetheless, their work provides a strong lower bound of $c_5 \geq 0.0608516$.
Improving on this lower bound through SAT solving seems very challenging, as we show next.
We know that $\mu_5(16) = 112$, from where~\Cref{lemma:folklore} yields that for $n > 16$ we have
\[
	\mu_5(n) \geq 112 \cdot \frac{{\binom{n}{16}}}{{\binom{n-5}{11}}} = 112 \cdot {\binom{n}{5}} / {\binom{16}{5}} = \frac{112}{4368} \cdot {\binom{n}{5}},
\]
from where
\(
	c_5 \geq \frac{112}{4368} \approx 0.02564.
\)
Following the same method for $n = 380$ yields $c_5 \geq 0.060857$. % Verifying these computationally is currently unfeasible, so we expect that new techniques will be required for determining $c_5$.
That is, improving on the bound of Goaoc et al.~\cite{goaoc_et_al:LIPIcs.SOCG.2015.300} would require solving $n=380$, which is currently out of reach.

% \begin{corollary}
% 	We will prove later on that $\mu_5(16) = 112$, thus implying that for every $n > 16$ we have 
% 	\[
% 	\mu_5(n) \geq 112 \cdot \frac{{n \choose 16}}{{n-5 \choose 11}} = \frac{1}{4680} n(n-1)(n-2)(n-3)(n-4).
% 	\]
% \end{corollary}

\paragraph{\bf Open Problems.}
We offer the following challenges:
\begin{enumerate}[itemsep=0.5em]
	\item \textbf{(\$500)} Prove or disprove~Conjecture~\ref{conjecture:main}.
	\item Obtain and verify the value of $\mu_5(17)$ and $\mu_5(19)$. If those values also match our conjectured bounds (182 and 378), this would contribute with an even stronger piece of evidence for our conjecture; the first $6$ odd values of $n$ would fit the degree $5$ polynomial we proposed for odd $n$, meaning that if $\mu_5(2n+1)$ were to be a fixed polynomial of degree $5$, then it must be the one we proposed.
	\item Obtain concrete bounds and design constructions for $\mu_6(\cdot)$ and $\mu_7(\cdot)$. In particular, we have checked that the parabolic construction is not optimal for $\mu_6(\cdot)$, thus suggesting that new insights will be needed. We hope our methodology based on realizations of SLS results can be helpful in this case as well.
	% \item Design a tool capable of certifying that a signotope assignment is \emph{not} realizable. Conversely, our local search model has not been able to find realizations for pointsets with above 20 points, and thus a more efficient dedicated algorithm seems possible.
	\item It is well known by now that symmetry breaking can provide dramatic performance advantages for SAT-solving in combinatorial problems~\cite{symmetry-breaking1,symmetry-breaking2, Subercaseaux_Heule_2023,LPAR2023:Toward_Optimal_Radio_Colorings}. A natural question is whether the $4$-fold symmetry of the pinwheel construction can be assumed without loss of generality, and if so, what would be an efficient way of taking advantage of that fact. It is worth mentioning here that the threefold symmetry of constructions for $\mu_4(\cdot)$ has been repeatedly conjectured to achieve optimality~\cite{AFMS}, and yet remains unproven.
\end{enumerate}

\paragraph{\bf Acknowledgments.}
 This work is partially supported by the U.S. National Science Foundation (NSF) under grant CCF-2108521. We thank anonymous reviewers from CICM 2024 for their feedback and suggestions.

 \bibliographystyle{splncs04}
 \bibliography{bibliography}

\end{document}